\algnewcommand{\algorithmicgoto}{\textbf{go to}}%
\algnewcommand{\Goto}[1]{\algorithmicgoto~\ref{#1}}%
\DeclareMathOperator*{\argmin}{\arg\!\min}
\newtheorem{theorem}{Theorem}
\newtheorem{lemma}{Lemma}
\newtheorem{definition}{Definition}
\begin{document}

\title{A Novel Approach to the Common Due-Date Problem on Single and Parallel Machines}
\author{Abhishek Awasthi$^*$, J\"org L\"assig$^*$ and Oliver Kramer$^\dag$\\ \\
$^*$Department of Computer Science\\
University of Applied Sciences Zittau/G\"orlitz \\ G\"orlitz, Germany\\
\{abhishek.awasthi, joerg.laessig\}@hszg.de
\and
$^\dag$Department of Computing Science\\
Carl von Ossietzky University of Oldenburg\\
Oldenburg, Germany \\
oliver.kramer@uni-oldenburg.de
}
\maketitle

\begin{abstract}
This chapter presents a novel idea for the general case of the Common Due-Date (CDD) scheduling problem. The problem is about scheduling a certain number of jobs on a single or parallel machines where all the jobs possess different processing times but a common due-date. The objective of the problem is to minimize the total penalty incurred due to earliness or tardiness of the job completions. This work presents exact polynomial algorithms for optimizing a given job sequence for single and identical parallel machines with the run-time complexities of $O(n \log n)$ for both cases, where $n$ is the number of jobs. Besides, we show that our approach for the parallel machine case is also suitable for non-identical parallel machines. We prove the optimality for the single machine case and the runtime complexities of both. Henceforth, we extend our approach to one particular dynamic case of the CDD and conclude the chapter with our results for the benchmark instances provided in the OR-library.
\end{abstract}

\section{Introduction}
The Common Due-Date scheduling problem involves sequencing and scheduling of jobs over machine(s) against a common due-date. Each job possesses a processing time and different penalties per unit time in case the job is completed before or later than the due-date. The objective of the problem is to schedule the jobs so as to minimize the total penalty due to earliness or tardiness of all the jobs. In practice, a common due date problem occurs in almost any manufacturing industry. Earliness of the produced goods is not desired because it requires the maintenance of some stocks leading to some expenses to the industry for storage cost, tied-up capital with no cash flow \emph{etc.}. On the other hand, a tardy job leads to customer dissatisfaction.

When scheduling on a single machine against a common due date, one job at most can be completed exactly at the due date. Hence, some of the jobs will complete earlier than the common due-date, while other jobs will finish later. Generally speaking, there are two classes of the common due-date problem which have proven to be NP-hard, namely: 
\begin{itemize}
\item Restrictive CDD problem
\item Non-restrictive CDD problem.
\end{itemize}

A CDD problem is said to be \textit{restrictive} when the optimal value of the objective function depends on the due-date of the problem instance. In other words, changing the due date of the problem changes the optimal solution as well. However, in the \textit{non-restrictive} case a change in the value of the due-date for the problem instance does not affect the solution value. It can be easily proved that in the restrictive case, the sum of the processing times of all the jobs is strictly greater than the due date and in the non-restrictive case the sum of the processing times is less than or equal to the common due-date.

In this chapter, we study the restrictive case of the problem. However, our approach can be applied to the non-restrictive case on the same lines. We consider the scenario where all the jobs are processed on one or more machines without pre-emption and each job possesses different earliness/tardiness penalties. We also discuss a particular dynamic case of the CDD on a single machine and prove that our approach is optimal with respect to the solution value.

\section{Related Work}
The Common due-date problem has been studied extensively during the last $30$ years with several variants and special cases~\cite{seidmann,kanet}. In $1981$, Kanet presented an O($n \log n $) algorithm for minimizing the total absolute deviation of the completion of jobs from the due date for the single machine, $n$ being the number of jobs~\cite{kanet}. Panwalkar~\emph{et al.} considered the problem of common due-date assignment to minimize the total penalty for one machine~\cite{panwalkar}. The objective of the problem was to determine the optimum value for the due-date and the optimal job sequence to minimize the penalty function, where the penalty function also depends on the due-date along with earliness and tardiness. An algorithm of O($n \log n$) complexity was presented but the special problem considered by them consisted of symmetric costs for all the jobs~\cite{seidmann,panwalkar}.

Cheng again considered the same problem with slight variations and presented a linear programming formulation~\cite{cheng1}. In $1991$ Cheng and Kahlbacher and Hall~\emph{et al.} studied the CDD problem extensively, presenting some useful properties for the general case~\cite{cheng,hall}. A pseudo polynomial algorithm of O($n^2d$) (where, $d$ is the common due-date) complexity was presented by Hoogeveen and Van de Velde for the restrictive case with one machine when the earliness and tardiness penalty weights are symmetric for all the jobs~\cite{hoogeveen}. In $1991$ Hall~\emph{et al.} studied the unweighted earliness and tardiness problem and presented a dynamic programming algorithm~\cite{hall}. Besides these earlier works, there has been some research on heuristic algorithms for the general common due date problem with asymmetric penalty costs. James presented a tabu search algorithm for the general case of the problem in $1997$~\cite{james}.

More recently in $2003$, Feldmann and Biskup approached the problem using metaheuristic algorithms namely simulated annealing (SA) and threshold accepting and presented the results for benchmark instances up to $1000$ jobs on a single machine~\cite{biskup,feldmann}. Another variant of the problem was studied by Toksari and G\"uner in $2009$, where they considered the common due date problem on parallel machines under the effects of time dependence and deterioration~\cite{toksari}. Ronconi and Kawamura proposed a branch and bound algorithm in $2010$ for the general case of the CDD and gave optimal results for small benchmark instances~\cite{ronconi}. In $2012$, Rebai~\emph{et al.} proposed metaheuristic and exact approaches for the common due date problem to schedule preventive maintenance tasks~\cite{rebai}. 

In $2013$, Banisadr~\emph{et al.} studied the single-machine scheduling problem for the case that each job is considered to have linear earliness and quadratic tardiness penalties with no machine idle time. They proposed a hybrid approach for the problem based upon evolutionary algorithm concepts~\cite{Banisadr}. Yang~\emph{et al.} investigated the single-machine multiple common due dates assignment and scheduling problems in which the processing time of any job depends on its position in a job sequence and its resource allocation. They proposed a polynomial algorithm to minimize the total penalty function containing earliness, tardiness, due date, and resource consumption costs~\cite{Yang}.

This chapter is an extension of a research paper presented by the same authors in~\cite{cdd}. We extend our approach for a dynamic case of the problem and for non-identical parallel machines. Useful examples for both the single and parallel machines case are presented.

\section{Problem Formulation}\label{probform}
In this Section we give the mathematical notation of the common due date problem based on~\cite{biskup}. We also define some new parameters which are necessary for our considerations later on.

\noindent
Let, \\
$n$  =  total number of jobs \\
$m$ = total number of machines \\
$n_{j}$ = number of jobs processed by machine $j$ $( j = 1,2,\dots,m)$  \\
$M_{j}$ = time at which machine $j$ finished its latest job \\
$W_{j}^{k}$ = $k^{th}$ job processed by machine $j$ \\
$P_{i}$ = processing time of job $i$ $( i = 1,2,\dots,n)$ \\
$C_{i}$ = completion time of job $i$ $( i = 1,2,\dots,n)$ \\
$D$ = the common due date \\
$\alpha_{i}$ = the penalty cost per unit time for job $i$ for being early \\ 
$\beta_{i}$ = the penalty cost per unit time for job $i$ for being tardy \\ 
$E_{i}$ = earliness of job $i$, $E_i = \max\{0,D-C_{i}\}$ ($i = 1,2,\dots,n$) \\
$T_{i}$ = tardiness of job $i$, $T_{i}=\max\{0,C_{i}-D\}$ ($i = 1,2,\dots,n$)\;.

The cost corresponding to job $i$ is then expressed as $\alpha_{i} \cdot E_{i}+\beta_{i} \cdot T_{i}$. If job $i$ is completed at the due date then both $E_{i}$ and $T_{i}$ are equal to zero and the cost assigned to it is zero. When job $i$ does not complete at the due date, either $E_{i}$ or $T_{i}$ is non-zero and there is a strictly positive cost incurred. The objective function of the problem can now be defined as
\begin{equation}\label{ob}
\min \sum\limits_{i=1}^{n} (\alpha_{i} \cdot E_{i}+\beta_{i} \cdot T_{i}) \;.
\end{equation}

According to the 3-field problem classification introduced by Graham~\emph{et al.}~\cite{Graham}, the common due-date scheduling problem on a single machine can be expressed as $1|P_i|\sum_{\substack{i=1}}^{n} (\alpha_{i}E_{i}+\beta_{i}T_{i})$. This three field notation implies that the jobs with different processing times are scheduled on a single machine to minimize the total earliness and tardiness penalty.

\section{The Exact Algorithm for a Single Machine}
We now present the ideas and the algorithm for solving the single machine case for a given job sequence. From here onwards we assume that there are $n$ jobs to be processed by a machine and all the parameters stated at the beginning of Section~\ref{probform} represent the same meaning. The intuition for our approach comes from a property presented and proved by Cheng and Kahlbacher for the CDD problem~\cite{cheng}. They proved that the optimal solution for a problem instance with general penalties has no idle time between any two consecutive jobs or in other words, when the schedule is compact. This property implies that at no point of time the machine processing the jobs is left idle till the processing of all the jobs is completed. In our approach we first initialize the completion times of all the jobs without any idle times and then shift all the jobs with the same amount of time.

Let $J$ be the input job sequence where $J_i$ is the $i$th job in the sequence $J$. Note that without loss of any generality we can assume $J_i=i$, since we can rank the jobs for any sequence as per their order of processing. The algorithm takes the job sequence $J$ as the input and returns the optimal value for Equation~\eqref{ob}. There are three requirements for the optimal solution: allotment of jobs to specific machines, the order of processing of jobs in every machine and the completion times for all the jobs.

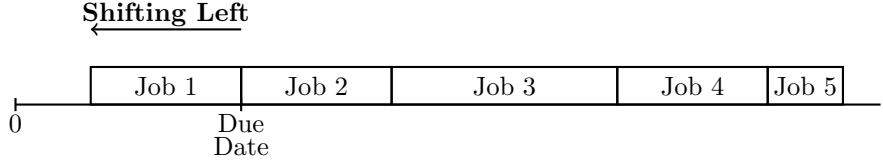
\begin{figure}[bht]
\centering
\begin{tikzpicture}[thick]
\draw (-6,1) -- (5.5,1);
\draw[thick,snake=ticks,segment length=3cm] (-6,1) -- (-3,1);
\draw[<-] (-5,2) -- (-3,2);
\node at (-3 , 0.75) {Due};
\node at (-3 , 0.45) {Date};
\node at (-6 , 0.75) {0};
\filldraw[fill=none,thick] (-5,1) rectangle (-3,1.5) node[midway] {Job 1};
\filldraw[fill=none,thick] (-3,1) rectangle (-1,1.5) node[midway] {Job 2};
\filldraw[fill=none,thick] (-1,1) rectangle (2,1.5) node[midway] {Job 3};
\filldraw[fill=none,thick] (2,1) rectangle (4,1.5) node[midway] {Job 4};
\filldraw[fill=none,thick] (4,1) rectangle (5,1.5) node[midway] {Job 5};
\draw	(-4,2.2) node{\textbf{Shifting Left}};
\end{tikzpicture}
\caption{Left shift (decrease in completion times) of all the jobs towards decreasing total tardiness for a sequence with $5$ jobs. Each reduction is done by the minimum of the processing time of the job which is starting at the due date and the maximum possible left shift for the first job.}
\label{fig1}
\end{figure}

Using the property of compactness proved by Cheng and Kahlbacher~\cite{cheng}, our algorithm assigns the completion times to all the jobs such that the first job is finished at $\max\{P_{1},D\}$ and the rest of the jobs follow without any idle time in order to obtain an initial solution which is then improved incrementally. It is quite apparent that a better solution for this sequence can be found only by reducing the completion times of all the jobs, {\em i.e.\ \/}shifting all the jobs towards decreasing total tardiness penalty as shown in Figure~\ref{fig1} with five jobs. Shifting all the jobs to the right will only increase the total tardiness. 

Hence, we first assign the jobs in $J$ to the machine such that none of the jobs are early and there is no idle time between the processing of any two consecutive jobs, as stated in Equation~\eqref{initial}.
\begin{equation}\label{initial}
C_{i} = 
\begin{cases}
\max\{P_{1},D\} & \mbox{if } i=1 \\ 
C_{i-1}+P_{i} & \mbox{if } 2 \leq i \leq n \;.
\end{cases}
\end{equation}

Before stating the exact algorithm for a given sequence for the single machine case, algorithm we first introduce some new parameters, definitions and theorems which are useful for the description of the algorithm. We first define $DT_{i} = C_{i}-D$, $i=1,2,\dots,n$ and $ES=C_{1}-P_{1}$. It is clear that $DT_{i}$ is the algebraic deviation of the completion time of job $i$  from the due date and $ES$ is the maximum possible shift (reduction of completion time) for the first job.

\begin{definition}\label{pl}
$PL$ is a vector of length $n$ and any element of $PL$ ($PL_{i}$) is the penalty possessed by job $i$. We define $PL$, as
\begin{equation}
PL_{i} =
\begin{cases}
	-\alpha_{i}, & \mbox{if }DT_{i} \leq 0 \\
	\beta_{i}, & \mbox{if }DT_{i} > 0 \;.
\end{cases}
\end{equation}
\end{definition}

With the above Definition we can express the objective function stated by Equation~\eqref{ob} as $\min(Sol)$, where
\begin{equation}\label{obn}
Sol = \sum\limits_{i=1}^{n} (DT_{i} \cdot PL_{i}) \;.
\end{equation}

The Algorithm~\ref{main} mentioned below returns the optimal solution value for any job sequence for the CDD problem on a single machine.

\begin{algorithm}
\DontPrintSemicolon
\textbf{Initialize} $C_{i}$ $\forall$ $i$ (Equation~\eqref{initial})\;
\textbf{Compute} $PL,DT,ES$\;
$Sol \gets \sum\limits_{i=1}^{n} (DT_{i} \cdot PL_{i})$\;
$j \gets 2$\;
\While{$(j < n+1)$}{
$C_{i} \gets C_{i}-\min\{ES,DT_{j}\}$, $\forall$ $i$\;
\textbf{Update} $PL,DT,ES$\;
$V_{j} \gets \sum\limits_{i=1}^{n} (DT_{i} \cdot PL_{i})$\;
\lIf {$(V_{j} < Sol)$}
	{ $Sol \gets V_{j}$}
\lElse{
	\Goto {return}\;
}
$j \gets j+1$\;
}
\Return $Sol$ \label{return} \;
\caption{Exact Algorithm for Single Machine}
\label{main}
\end{algorithm}

\section{Parallel Machine Case}
For the parallel machine case we first need to assign the jobs to each machine to get the number of jobs and their sequence in each machine. In addition to the parameters explained in Section~\ref{probform}, we define a new parameter $\lambda$, which is the machine assigned to each job.

\begin{definition}
We define $\lambda$ as the machine which has the earliest scheduled completion time of the last job on that machine. Using the notation mentioned in Section~\ref{probform}, $\lambda$ can be mathematically expressed as
\begin{equation*}
\lambda = \argmin_{j=1,2,\dots,m} M_j \;.\\
\end{equation*}
\label{lambda}
\end{definition}
\begin{algorithm}
\DontPrintSemicolon
$M_{j} \gets 0$ $\forall j=1,2,\dots,m$\;
$n_j \gets 1$ $\forall j=1,2,\dots,m$\;
$i \gets 0$\;
\For {$j \gets 1$ \normalfont\textbf{to} $m$}{
$i \gets i+1$\;
$W_{j}^{1} \gets i$\;
$M_{j} \gets \max\{P_{i},D\}$\;
}

\For {$i \gets m+1$ \normalfont\textbf{to} $n$}{
\textbf{Compute} $\lambda$\;
$n_{\lambda} \gets n_{\lambda}+1$\;
$W_{\lambda}^{n_{\lambda}} \gets i$\;
$M_{\lambda} \gets M_{\lambda}+P_{i}$\;
}
\For {each machine}{
\textbf{Algorithm~\ref{main}}\;
}
\caption{Exact Algorithm: Parallel Machine}
\label{parallel}
\end{algorithm}

Algorithm~\ref{parallel} assigns the first $m$ jobs to each machine respectively such that they all finish processing at the due date or after their processing time, whichever is higher. For the remaining jobs, we assign a machine $\lambda$ to job $i$ since it offers the least possible tardiness. Likewise each job is assigned at a specific machine such that the tardiness for all the jobs is the least for the given job sequence. The job sequence is maintained in the sense that for any two jobs $i$ and $j$ such that job $j$ follows $i$; the Algorithm~\ref{parallel} will either maintain this sequence or assign the same starting times at different machines to both the jobs. Finally, Algorithm~\ref{parallel} will give us the number of jobs $(n_{j})$ to be processed by any machine $j$ and the sequence of jobs in each machine, $W_{j}^{k}$. This is the best assignment of jobs at machines for the given sequence. Note that the sequence of jobs is still maintained here, since Algorithm~\ref{parallel} ensures that any job $i$ is not processed after a job $i+1$. Once we have the jobs assigned to each machine, the problem then converts to $m$ single machine problems, since all the machines are independent.

For the non-identical parallel machine case we need a slight change in the definition of $\lambda$ in Definition~\ref{lambda}. Recall that $M_j$ is the time at which machine $j$ finished its latest scheduled job and $\lambda$ is the machine which has the least completion time of jobs, among all the machines. In the non-identical machine case we need to make sure that the assigned machine not only has the least completion time but it is also feasible for the particular job(s). Hence, for the non-identical machines case, the definition of $\lambda$ in Algorithm~\ref{parallel} will change to $\lambda_i$ where

\begin{equation*}
\lambda_i = \argmin_{j=1,2,\dots,m} M_j\;, \text{ such that machine } j \text{ is feasible for job }i\;.\\
\end{equation*}

For the remaining part, the Algorithm~\ref{parallel} works in the same manner as for the identical parallel machines. Algorithm~\ref{parallel} can then be applied to the non-identical independent parallel machine case for the initial allocation of jobs to machines.

\section{Illustration of the Algorithms}
In this Section we explain Algorithm~\ref{main} and~\ref{parallel} with the help of illustrative examples consisting of $n=5$ jobs for both, single and parallel machine case. We optimize the given sequence of jobs $J$ where $J_i = i,$ $i=1,2,\dots,5$. The data for this example is given in Table~\ref{data}. There are five jobs to be processed against a common due-date ($D$) of $16$. The objective is to minimize Equation~\eqref{obn}.

\begin{table}
\centering
\caption{The data for the exemplary case. The parameters possess the same meaning as explained in Section~\ref{probform}.}
\label{data}
\begin{tabular}{ p{0.3cm}cp{0.3cm} p{0.3cm}cp{0.3cm} p{0.3cm}cp{0.3cm} p{0.3cm}cp{0.3cm} }\\ \hline\noalign{\smallskip}
&$i$ &&& $P_{i}$ &&& $\alpha_i$ &&& $\beta_i$ & \\ \hline
%\noalign{\smallskip}\hline\noalign{\smallskip}
&1 &&& 6 &&& 7 &&& 9 &\\
&2 &&& 5 &&& 9 &&& 5 &\\
&3 &&& 2 &&& 6 &&& 4 &\\
&4 &&& 4 &&& 9 &&& 3 &\\
&5 &&& 4 &&& 3 &&& 2 &\\
\noalign{\smallskip}\hline\noalign{\smallskip}
\end{tabular}
\end{table}

\subsection{Single Machine Case}
We first initialize the completion times of all the jobs according to Equation~\eqref{initial} as shown in Figure~\ref{sm1}. The first job is completed at the due-date and possesses no penalty. However, all the remaining jobs from $J_i, i=2,3,4,5$ are tardy. After the initialization, the total penalty of this schedule is $Sol = \sum_{\substack{i=1}}^{n} (\alpha_{i}\cdot E_i + \beta_{i}\cdot T_i)=(0 \cdot 7+ 0\cdot 9) + (0 \cdot 9 + 5 \cdot 5) + (0 \cdot 6 + 7 \cdot 4) + (0 \cdot 9 + 11 \cdot 3) + (0 \cdot 3 + 15 \cdot 2)$. Hence, the objective value $Sol = 116$.

\begin{figure}[bht]
\centering
\begin{tikzpicture}[thick]
\filldraw[fill=gray!50,thick] (-3.5,0) rectangle (-1.4,0.7) node[midway] {6};
\filldraw[fill=gray!20,thick] (-1.4,0) rectangle (0.35,0.7) node[midway] {5}; 
\filldraw[fill=gray,thick] (0.35,0) rectangle (1.05,0.7) node[midway] {2};
\filldraw[fill=gray!20,thick] (1.05,0) rectangle (2.45,0.7) node[midway] {4};
\filldraw[fill=gray!50,thick] (2.45,0) rectangle (3.85,0.7) node[midway] {4};
\draw[thick,->] (-7,0) -- (4.6,0);
\draw[thick,snake=ticks,segment length=0.7cm,] (-7,0) -- (4.6,0);
\node at (4.6 , -0.25) {$t$};
\node at (-7 , -0.25) {$0$};
\node at (-7+0.7 , -0.25) {$2$};
\node at (-7+0.7*2 , -0.25) {$4$};
\node at (-7+0.7*3 , -0.25) {$6$};
\node at (-7+0.7*4 , -0.25) {$8$};
\node at (-7+0.7*5 , -0.25) {$10$};
\node at (-7+0.7*6 , -0.25) {$12$};
\node at (-7+0.7*7 , -0.25) {$14$};
\node at (-7+0.7*8 , -0.25) {$16$};
\node at (-7+0.7*9 , -0.25) {$18$};
\node at (-7+0.7*10 , -0.25) {$20$};
\node at (-7+0.7*11 , -0.25) {$22$};
\node at (-7+0.7*12 , -0.25) {$24$};
\node at (-7+0.7*13 , -0.25) {$26$};
\node at (-7+0.7*14 , -0.25) {$28$};
\node at (-7+0.7*15 , -0.25) {$30$};
\node at (-7+0.7*16 , -0.25) {$32$};
\end{tikzpicture}
\caption{Initialization of the completion times of all the jobs. The first job completes processing at the due date and the remaining jobs follow without any idle time.}
\label{sm1}
\end{figure}
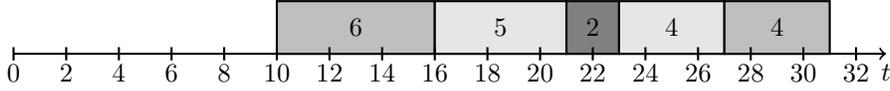

After the first left shift of $5$ time units, the total penalty of this schedule is $Sol = \sum_{\substack{i=1}}^{n} (\alpha_{i}\cdot E_i + \beta_{i}\cdot T_i)=(5 \cdot 7+ 0\cdot 9) + (0 \cdot 9 + 0 \cdot 5) + (0 \cdot 6 + 2 \cdot 4) + (0 \cdot 9 + 6 \cdot 3) + (0 \cdot 3 + 10 \cdot 2)$. Hence the objective value $Sol = 81$.

\begin{figure}[bht]
\centering
\begin{tikzpicture}[thick]
\filldraw[fill=gray!50,thick] (-7+2.5*0.7,0) rectangle (-7+5.5*0.7,0.7) node[midway] {6};
\filldraw[fill=gray!20,thick] (-7+5.5*0.7,0) rectangle (-7+8*0.7,0.7) node[midway] {5}; 
\filldraw[fill=gray,thick] (-7+8*0.7,0) rectangle (-7+9*0.7,0.7) node[midway] {2};
\filldraw[fill=gray!20,thick] (-7+9*0.7,0) rectangle (-7+11*0.7,0.7) node[midway] {4};
\filldraw[fill=gray!50,thick] (-7+11*0.7,0) rectangle (-7+13*0.7,0.7) node[midway] {4};
\draw[thick,->] (-7,0) -- (4.6,0);
\draw[thick,snake=ticks,segment length=0.7cm,] (-7,0) -- (4.6,0);
\node at (4.6 , -0.25) {$t$};
\node at (-7 , -0.25) {$0$};
\node at (-7+0.7 , -0.25) {$2$};
\node at (-7+0.7*2 , -0.25) {$4$};
\node at (-7+0.7*3 , -0.25) {$6$};
\node at (-7+0.7*4 , -0.25) {$8$};
\node at (-7+0.7*5 , -0.25) {$10$};
\node at (-7+0.7*6 , -0.25) {$12$};
\node at (-7+0.7*7 , -0.25) {$14$};
\node at (-7+0.7*8 , -0.25) {$16$};
\node at (-7+0.7*9 , -0.25) {$18$};
\node at (-7+0.7*10 , -0.25) {$20$};
\node at (-7+0.7*11 , -0.25) {$22$};
\node at (-7+0.7*12 , -0.25) {$24$};
\node at (-7+0.7*13 , -0.25) {$26$};
\node at (-7+0.7*14 , -0.25) {$28$};
\node at (-7+0.7*15 , -0.25) {$30$};
\node at (-7+0.7*16 , -0.25) {$32$};
\end{tikzpicture}
\caption{All the jobs are shifted left by $\min\{ES,DT_j\}=5$ units processing time. }
\label{sm2}
\end{figure}

After the third left shift of $2$ time units (Figure~\ref{sm3}), the total penalty of this schedule is $Sol = \sum_{\substack{i=1}}^{n} (\alpha_{i}\cdot E_i + \beta_{i}\cdot T_i)=(7 \cdot 7+ 0\cdot 9) + (2 \cdot 9 + 0 \cdot 5) + (0 \cdot 6 + 0 \cdot 4) + (0 \cdot 9 + 4 \cdot 3) + (0 \cdot 3 + 4 \cdot 2)$. Hence the objective value $Sol = 95$.

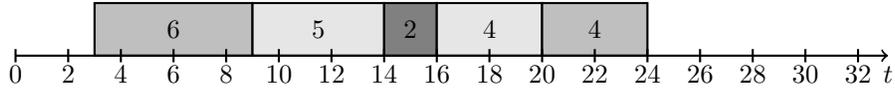
\begin{figure}[bht]
\centering
\begin{tikzpicture}[thick]
\filldraw[fill=gray!50,thick] (-7+1.5*0.7,0) rectangle (-7+4.5*0.7,0.7) node[midway] {6};
\filldraw[fill=gray!20,thick] (-7+4.5*0.7,0) rectangle (-7+7*0.7,0.7) node[midway] {5}; 
\filldraw[fill=gray,thick] (-7+7*0.7,0) rectangle (-7+8*0.7,0.7) node[midway] {2};
\filldraw[fill=gray!20,thick] (-7+8*0.7,0) rectangle (-7+10*0.7,0.7) node[midway] {4};
\filldraw[fill=gray!50,thick] (-7+10*0.7,0) rectangle (-7+12*0.7,0.7) node[midway] {4};
\draw[thick,->] (-7,0) -- (4.6,0);
\draw[thick,snake=ticks,segment length=0.7cm,] (-7,0) -- (4.6,0);
\node at (4.6 , -0.25) {$t$};
\node at (-7 , -0.25) {$0$};
\node at (-7+0.7 , -0.25) {$2$};
\node at (-7+0.7*2 , -0.25) {$4$};
\node at (-7+0.7*3 , -0.25) {$6$};
\node at (-7+0.7*4 , -0.25) {$8$};
\node at (-7+0.7*5 , -0.25) {$10$};
\node at (-7+0.7*6 , -0.25) {$12$};
\node at (-7+0.7*7 , -0.25) {$14$};
\node at (-7+0.7*8 , -0.25) {$16$};
\node at (-7+0.7*9 , -0.25) {$18$};
\node at (-7+0.7*10 , -0.25) {$20$};
\node at (-7+0.7*11 , -0.25) {$22$};
\node at (-7+0.7*12 , -0.25) {$24$};
\node at (-7+0.7*13 , -0.25) {$26$};
\node at (-7+0.7*14 , -0.25) {$28$};
\node at (-7+0.7*15 , -0.25) {$30$};
\node at (-7+0.7*16 , -0.25) {$32$};
\end{tikzpicture}
\caption{All the jobs are shifted left by $\min\{ES,DT_j\}=2$ units processing time.}
\label{sm3}
\end{figure}

Since the new value of the objective function is higher than in the previous step, we have the optimal value and schedule for this problem as shown in Figure~\ref{sm2} with a total penalty of $81$.

\begin{figure}[bht]
\centering
\begin{tikzpicture}[thick]
\filldraw[fill=gray!50,thick] (-7,0) rectangle (-7+3*0.7,0.7) node[midway] {6};
\filldraw[fill=gray!20,thick] (-7+3*0.7,0) rectangle (-7+5.5*0.7,0.7) node[midway] {5}; 
\filldraw[fill=gray,thick] (-7+5.5*0.7,0) rectangle (-7+6.5*0.7,0.7) node[midway] {2};
\filldraw[fill=gray!20,thick] (-7+6.5*0.7,0) rectangle (-7+8.5*0.7,0.7) node[midway] {4};
\filldraw[fill=gray!50,thick] (-7+8.5*0.7,0) rectangle (-7+10.5*0.7,0.7) node[midway] {4};
\draw[thick,->] (-7,0) -- (4.6,0);
\draw[thick,snake=ticks,segment length=0.7cm,] (-7,0) -- (4.6,0);
\node at (4.6 , -0.25) {$t$};
\node at (-7 , -0.25) {$0$};
\node at (-7+0.7 , -0.25) {$2$};
\node at (-7+0.7*2 , -0.25) {$4$};
\node at (-7+0.7*3 , -0.25) {$6$};
\node at (-7+0.7*4 , -0.25) {$8$};
\node at (-7+0.7*5 , -0.25) {$10$};
\node at (-7+0.7*6 , -0.25) {$12$};
\node at (-7+0.7*7 , -0.25) {$14$};
\node at (-7+0.7*8 , -0.25) {$16$};
\node at (-7+0.7*9 , -0.25) {$18$};
\node at (-7+0.7*10 , -0.25) {$20$};
\node at (-7+0.7*11 , -0.25) {$22$};
\node at (-7+0.7*12 , -0.25) {$24$};
\node at (-7+0.7*13 , -0.25) {$26$};
\node at (-7+0.7*14 , -0.25) {$28$};
\node at (-7+0.7*15 , -0.25) {$30$};
\node at (-7+0.7*16 , -0.25) {$32$};
\end{tikzpicture}
\caption{Final left shift by $ES=3$ units.}
\label{sm4}
\end{figure}
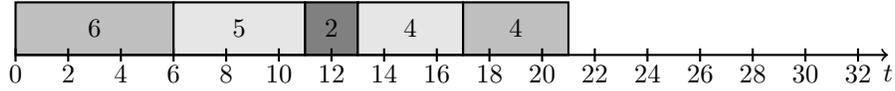

For the sake of completeness, Figure~\ref{sm4} shows the next step if we continue reducing the completion times using the same criterion as before. After the last possible left shift of $3$ time units, the total penalty of this schedule is $Sol = \sum_{\substack{i=1}}^{n} (\alpha_{i}\cdot E_i + \beta_{i}\cdot T_i)=(10 \cdot 7+ 0\cdot 9) + (5 \cdot 9 + 0 \cdot 5) + (3 \cdot 6 + 0 \cdot 4) + (0 \cdot 9 + 1 \cdot 3) + (0 \cdot 3 + 5 \cdot 2)$. Hence the objective value $Sol = 146$. The total penalty increases further to a value of $146$. Hence, the optimal value for this sequence is $81$.

\subsection{Parallel Machine Case}
In the parallel machine case we consider two parallel machines and illustrate how we first assign the jobs in the same job sequence $J$ to the machines and optimize them independently. The data used in this example is the same as in Table~\ref{data}. The common due-date for the instance is also the same as earlier, $D=16$.

\begin{figure}[bht]
\centering
\subfigure[ ]{\begin{tikzpicture}[thick,framed]
\draw[thick,->] (-7,1) -- (-7+0.45*11.5,1);
\draw[thick,snake=ticks,segment length=0.45cm] (-7,1) -- (-7+0.45*11.5,1);
\node at (-7 , 0.75) {{\scriptsize $0$}};
\node at (-7+0.45 , 0.75) {{\scriptsize $2$}};
\node at (-7+0.45*2 , 0.75) {{\scriptsize $4$}};
\node at (-7+0.45*3 , 0.75) {{\scriptsize $6$}};
\node at (-7+0.45*4 , 0.75) {{\scriptsize $8$}};
\node at (-7+0.45*5 , 0.75) {{\scriptsize $10$}};
\node at (-7+0.45*6 , 0.75) {{\scriptsize $12$}};
\node at (-7+0.45*7 , 0.75) {{\scriptsize $14$}};
\node at (-7+0.45*8 , 0.75) {{\scriptsize $16$}};
\node at (-7+0.45*9 , 0.75) {{\scriptsize $18$}};
\node at (-7+0.45*10 , 0.75) {{\scriptsize $20$}};
\node at (-7+0.45*11 , 0.75) {{\scriptsize $22$}};
\node at (-7+0.45*11.5 , 1.25) {{\scriptsize $t$}};
\draw[thick,->] (-7,0) -- (-7+0.45*11.5,0);
\draw[thick,snake=ticks,segment length=0.45cm] (-7,0) -- (-7+0.45*11.5,0);
\node at (-7 , -0.25) {{\scriptsize $0$}};
\node at (-7+0.45 , -0.25) {{\scriptsize $2$}};
\node at (-7+0.45*2 , -0.25) {{\scriptsize $4$}};
\node at (-7+0.45*3 , -0.25) {{\scriptsize $6$}};
\node at (-7+0.45*4 , -0.25) {{\scriptsize $8$}};
\node at (-7+0.45*5 , -0.25) {{\scriptsize $10$}};
\node at (-7+0.45*6 , -0.25) {{\scriptsize $12$}};
\node at (-7+0.45*7 , -0.25) {{\scriptsize $14$}};
\node at (-7+0.45*8 , -0.25) {{\scriptsize $16$}};
\node at (-7+0.45*9 , -0.25) {{\scriptsize $18$}};
\node at (-7+0.45*10 , -0.25) {{\scriptsize $20$}};
\node at (-7+0.45*11 , -0.25) {{\scriptsize $22$}};
\node at (-7+0.45*11.5 , 0.25) {{\scriptsize $t$}};
\filldraw[fill=gray!50,thick] (-7+0.45-0.2,-1.5) rectangle (-7+4*0.45-0.2,-1) node[midway] {{\footnotesize 6}};
\filldraw[fill=gray!20,thick] (-7+4*0.45-0.2,-1.5) rectangle (-7+6.5*0.45-0.2,-1) node[midway] {{\footnotesize 5}}; 
\filldraw[fill=gray,thick] (-7+6.5*0.45-0.2,-1.5) rectangle (-7+7.5*0.45-0.2,-1) node[midway] {{\footnotesize 2}};
\filldraw[fill=gray!20,thick] (-7+7.5*0.45-0.2,-1.5) rectangle (-7+9.5*0.45-0.2,-1) node[midway] {{\footnotesize 4}};
\filldraw[fill=gray!50,thick] (-7+9.5*0.45-0.2,-1.5) rectangle (-7+11.5*0.45-0.2,-1) node[midway] {{\footnotesize 4}};
\end{tikzpicture}}
\subfigure[ ]{\begin{tikzpicture}[thick,framed]
\draw[thick,->] (-7,1) -- (-7+0.45*11.5,1);
\draw[thick,snake=ticks,segment length=0.45cm] (-7,1) -- (-7+0.45*11.5,1);
\node at (-7 , 0.75) {{\scriptsize $0$}};
\node at (-7+0.45 , 0.75) {{\scriptsize $2$}};
\node at (-7+0.45*2 , 0.75) {{\scriptsize $4$}};
\node at (-7+0.45*3 , 0.75) {{\scriptsize $6$}};
\node at (-7+0.45*4 , 0.75) {{\scriptsize $8$}};
\node at (-7+0.45*5 , 0.75) {{\scriptsize $10$}};
\node at (-7+0.45*6 , 0.75) {{\scriptsize $12$}};
\node at (-7+0.45*7 , 0.75) {{\scriptsize $14$}};
\node at (-7+0.45*8 , 0.75) {{\scriptsize $16$}};
\node at (-7+0.45*9 , 0.75) {{\scriptsize $18$}};
\node at (-7+0.45*10 , 0.75) {{\scriptsize $20$}};
\node at (-7+0.45*11 , 0.75) {{\scriptsize $22$}};
\node at (-7+0.45*11.5 , 1.25) {{\scriptsize $t$}};
\draw[thick,->] (-7,0) -- (-7+0.45*11.5,0);
\draw[thick,snake=ticks,segment length=0.45cm] (-7,0) -- (-7+0.45*11.5,0);
\node at (-7 , -0.25) {{\scriptsize $0$}};
\node at (-7+0.45 , -0.25) {{\scriptsize $2$}};
\node at (-7+0.45*2 , -0.25) {{\scriptsize $4$}};
\node at (-7+0.45*3 , -0.25) {{\scriptsize $6$}};
\node at (-7+0.45*4 , -0.25) {{\scriptsize $8$}};
\node at (-7+0.45*5 , -0.25) {{\scriptsize $10$}};
\node at (-7+0.45*6 , -0.25) {{\scriptsize $12$}};
\node at (-7+0.45*7 , -0.25) {{\scriptsize $14$}};
\node at (-7+0.45*8 , -0.25) {{\scriptsize $16$}};
\node at (-7+0.45*9 , -0.25) {{\scriptsize $18$}};
\node at (-7+0.45*10 , -0.25) {{\scriptsize $20$}};
\node at (-7+0.45*11 , -0.25) {{\scriptsize $22$}};
\node at (-7+0.45*11.5 , 0.25) {{\scriptsize $t$}};
\filldraw[fill=gray!50,thick] (-7+5*0.45,1) rectangle (-7+8*0.45,1.4) node[midway] {{\scriptsize 6}};
\filldraw[fill=gray!20,thick] (-7+5.5*.45,0) rectangle (-7+7*0.45,0.5) node[midway] {{\scriptsize 5}}; 
\filldraw[fill=gray,thick] (-7+6.5*0.45-0.2,-1.5) rectangle (-7+7.5*0.45-0.2,-1) node[midway] {{\scriptsize 2}};
\filldraw[fill=gray!20,thick] (-7+7.5*0.45-0.2,-1.5) rectangle (-7+9.5*0.45-0.2,-1) node[midway] {{\scriptsize 4}};
\filldraw[fill=gray!50,thick] (-7+9.5*0.45-0.2,-1.5) rectangle (-7+11.5*0.45-0.2,-1) node[midway] {{\scriptsize 4}};

\filldraw[dashed,fill=none,] (-7+0.45-0.2,-1.5) rectangle (-7+4*0.45-0.2,-1);
\filldraw[dashed,fill=none,] (-7+4*0.45-0.2,-1.5) rectangle (-7+6.5*0.45-0.2,-1); 

\end{tikzpicture}}
\subfigure[ ]{\begin{tikzpicture}[thick,framed]
\draw[thick,->] (-7,1) -- (-7+0.45*11.5,1);
\draw[thick,snake=ticks,segment length=0.45cm] (-7,1) -- (-7+0.45*11.5,1);
\node at (-7 , 0.75) {{\scriptsize $0$}};
\node at (-7+0.45 , 0.75) {{\scriptsize $2$}};
\node at (-7+0.45*2 , 0.75) {{\scriptsize $4$}};
\node at (-7+0.45*3 , 0.75) {{\scriptsize $6$}};
\node at (-7+0.45*4 , 0.75) {{\scriptsize $8$}};
\node at (-7+0.45*5 , 0.75) {{\scriptsize $10$}};
\node at (-7+0.45*6 , 0.75) {{\scriptsize $12$}};
\node at (-7+0.45*7 , 0.75) {{\scriptsize $14$}};
\node at (-7+0.45*8 , 0.75) {{\scriptsize $16$}};
\node at (-7+0.45*9 , 0.75) {{\scriptsize $18$}};
\node at (-7+0.45*10 , 0.75) {{\scriptsize $20$}};
\node at (-7+0.45*11 , 0.75) {{\scriptsize $22$}};
\node at (-7+0.45*11.5 , 1.25) {{\scriptsize $t$}};
\draw[thick,->] (-7,0) -- (-7+0.45*11.5,0);
\draw[thick,snake=ticks,segment length=0.45cm] (-7,0) -- (-7+0.45*11.5,0);
\node at (-7 , -0.25) {{\scriptsize $0$}};
\node at (-7+0.45 , -0.25) {{\scriptsize $2$}};
\node at (-7+0.45*2 , -0.25) {{\scriptsize $4$}};
\node at (-7+0.45*3 , -0.25) {{\scriptsize $6$}};
\node at (-7+0.45*4 , -0.25) {{\scriptsize $8$}};
\node at (-7+0.45*5 , -0.25) {{\scriptsize $10$}};
\node at (-7+0.45*6 , -0.25) {{\scriptsize $12$}};
\node at (-7+0.45*7 , -0.25) {{\scriptsize $14$}};
\node at (-7+0.45*8 , -0.25) {{\scriptsize $16$}};
\node at (-7+0.45*9 , -0.25) {{\scriptsize $18$}};
\node at (-7+0.45*10 , -0.25) {{\scriptsize $20$}};
\node at (-7+0.45*11 , -0.25) {{\scriptsize $22$}};
\node at (-7+0.45*11.5 , 0.25) {{\scriptsize $t$}};
\filldraw[fill=gray!50,thick] (-7+5*0.45,1) rectangle (-7+8*0.45,1.5) node[midway] {{\scriptsize 6}};
\filldraw[fill=gray!20,thick] (-7+5.5*.45,0) rectangle (-7+8*0.45,0.5) node[midway] {{\scriptsize 5}}; 
\filldraw[fill=gray,thick] (-7+8*0.45,1) rectangle (-7+9*0.45,1.5) node[midway] {{\scriptsize 2}};
\filldraw[fill=gray!20,thick] (-7+8*0.45,0) rectangle (-7+10*0.45,0.5) node[midway] {{\scriptsize 4}};
\filldraw[fill=gray!50,thick] (-7+9.5*0.45-0.2,-1.5) rectangle (-7+11.5*0.45-0.2,-1) node[midway] {{\scriptsize 4}};

\filldraw[dashed,fill=none,] (-7+0.45-0.2,-1.5) rectangle (-7+4*0.45-0.2,-1);
\filldraw[dashed,fill=none,] (-7+4*0.45-0.2,-1.5) rectangle (-7+6.5*0.45-0.2,-1); 
\filldraw[dashed,fill=none,] (-7+6.5*0.45-0.2,-1.5) rectangle (-7+7.5*0.45-0.2,-1);
\filldraw[dashed,fill=none,] (-7+7.5*0.45-0.2,-1.5) rectangle (-7+9.5*0.45-0.2,-1);
\filldraw[fill=gray!50,thick] (-7+9.5*0.45-0.2,-1.5) rectangle (-7+11.5*0.45-0.2,-1) node[midway] {{\footnotesize 4}};
\end{tikzpicture}}
\subfigure[ ]{\begin{tikzpicture}[thick,framed]
\draw[thick,->] (-7,1) -- (-7+0.45*11.5,1);
\draw[thick,snake=ticks,segment length=0.45cm] (-7,1) -- (-7+0.45*11.5,1);
\node at (-7 , 0.75) {{\scriptsize $0$}};
\node at (-7+0.45 , 0.75) {{\scriptsize $2$}};
\node at (-7+0.45*2 , 0.75) {{\scriptsize $4$}};
\node at (-7+0.45*3 , 0.75) {{\scriptsize $6$}};
\node at (-7+0.45*4 , 0.75) {{\scriptsize $8$}};
\node at (-7+0.45*5 , 0.75) {{\scriptsize $10$}};
\node at (-7+0.45*6 , 0.75) {{\scriptsize $12$}};
\node at (-7+0.45*7 , 0.75) {{\scriptsize $14$}};
\node at (-7+0.45*8 , 0.75) {{\scriptsize $16$}};
\node at (-7+0.45*9 , 0.75) {{\scriptsize $18$}};
\node at (-7+0.45*10 , 0.75) {{\scriptsize $20$}};
\node at (-7+0.45*11 , 0.75) {{\scriptsize $22$}};
\node at (-7+0.45*11.5 , 1.25) {{\scriptsize $t$}};
\draw[thick,->] (-7,0) -- (-7+0.45*11.5,0);
\draw[thick,snake=ticks,segment length=0.45cm] (-7,0) -- (-7+0.45*11.5,0);
\node at (-7 , -0.25) {{\scriptsize $0$}};
\node at (-7+0.45 , -0.25) {{\scriptsize $2$}};
\node at (-7+0.45*2 , -0.25) {{\scriptsize $4$}};
\node at (-7+0.45*3 , -0.25) {{\scriptsize $6$}};
\node at (-7+0.45*4 , -0.25) {{\scriptsize $8$}};
\node at (-7+0.45*5 , -0.25) {{\scriptsize $10$}};
\node at (-7+0.45*6 , -0.25) {{\scriptsize $12$}};
\node at (-7+0.45*7 , -0.25) {{\scriptsize $14$}};
\node at (-7+0.45*8 , -0.25) {{\scriptsize $16$}};
\node at (-7+0.45*9 , -0.25) {{\scriptsize $18$}};
\node at (-7+0.45*10 , -0.25) {{\scriptsize $20$}};
\node at (-7+0.45*11 , -0.25) {{\scriptsize $22$}};
\node at (-7+0.45*11.5 , 0.25) {{\scriptsize $t$}};
\filldraw[fill=gray!50,thick] (-7+5*0.45,1) rectangle (-7+8*0.45,1.5) node[midway] {{\scriptsize 6}};
\filldraw[fill=gray!20,thick] (-7+5.5*.45,0) rectangle (-7+8*0.45,0.5) node[midway] {{\scriptsize 5}}; 
\filldraw[fill=gray,thick] (-7+8*0.45,1) rectangle (-7+9*0.45,1.5) node[midway] {{\scriptsize 2}};
\filldraw[fill=gray!20,thick] (-7+8*0.45,0) rectangle (-7+10*0.45,0.5) node[midway] {{\scriptsize 4}};
\filldraw[fill=gray!50,thick] (-7+9*0.45,1) rectangle (-7+11*0.45,1.5) node[midway] {{\scriptsize 4}};

\filldraw[dashed,fill=none,] (-7+0.45-0.2,-1.5) rectangle (-7+4*0.45-0.2,-1);
\filldraw[dashed,fill=none,] (-7+4*0.45-0.2,-1.5) rectangle (-7+6.5*0.45-0.2,-1); 
\filldraw[dashed,fill=none,] (-7+6.5*0.45-0.2,-1.5) rectangle (-7+7.5*0.45-0.2,-1);
\filldraw[dashed,fill=none,] (-7+7.5*0.45-0.2,-1.5) rectangle (-7+9.5*0.45-0.2,-1);
\filldraw[dashed,fill=none,] (-7+9.5*0.45-0.2,-1.5) rectangle (-7+11.5*0.45-0.2,-1);

\end{tikzpicture}}
\caption{Illustration of the assignment of jobs to machines. After the assignment, each machine has a certain number of jobs in the given sequence.}
\label{pm}
\end{figure}
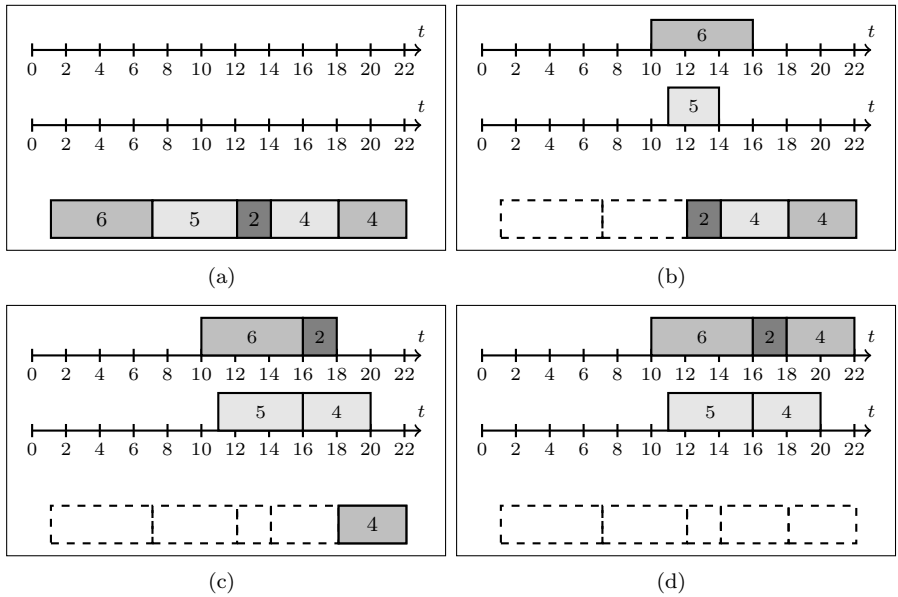

As shown in Figure~\ref{pm}(a), there are five jobs to be processed on two independent identical parallel machines, against a due-date of $16$. Hence, we first assign the jobs to a machine. We start with the first two jobs in the sequence $J$ and assign them to the machines separately at $\max\{P_i,D\}$, Figure~\ref{pm}(b). For the remaining jobs, we subsequently choose a machine which offers least tardiness for each job. The third job in the sequence is assigned to the first machine and the fourth job goes to the second machine on the same lines, as depicted in Figure~\ref{pm}(c). Finally, we have all the jobs assigned to a machine (Figure~\ref{pm}(d)) and each machine has a certain number of jobs to process in a given sequence. In this example, the first machine processes $3$ jobs with the processing times of $6,2$ and $4$, while the second machine processes $2$ jobs with processing times of $5$ and $4$, in that order. Once we have this assignment of jobs to machines, we can apply our single machine algorithm to both of them independently to optimize the overall earliness and tardiness penalty. Figure~\ref{popt} shows the best schedule for both the machines with an overall penalty of $32$.

\begin{figure}[bht]
\centering
\begin{tikzpicture}[thick,framed]
\draw[thick,->] (-7,2) -- (-7+0.7*11.5,2);
\draw[thick,snake=ticks,segment length=0.7cm] (-7,2) -- (-7+0.7*11.5,2);
\node at (-7 , 1.75) {{\scriptsize $0$}};
\node at (-7+0.7 , 1.75) {{\scriptsize $2$}};
\node at (-7+0.7*2 , 1.75) {{\scriptsize $4$}};
\node at (-7+0.7*3 , 1.75) {{\scriptsize $6$}};
\node at (-7+0.7*4 , 1.75) {{\scriptsize $8$}};
\node at (-7+0.7*5 , 1.75) {{\scriptsize $10$}};
\node at (-7+0.7*6 , 1.75) {{\scriptsize $12$}};
\node at (-7+0.7*7 , 1.75) {{\scriptsize $14$}};
\node at (-7+0.7*8 , 1.75) {{\scriptsize $16$}};
\node at (-7+0.7*9 , 1.75) {{\scriptsize $18$}};
\node at (-7+0.7*10 , 1.75) {{\scriptsize $20$}};
\node at (-7+0.7*11 , 1.75) {{\scriptsize $22$}};
\node at (-7+0.7*11.5 , 2.25) {{\scriptsize $t$}};
\draw[thick,->] (-7,0) -- (-7+0.7*11.5,0);
\draw[thick,snake=ticks,segment length=0.7cm] (-7,0) -- (-7+0.7*11.5,0);
\node at (-7 , -0.25) {{\scriptsize $0$}};
\node at (-7+0.7 , -0.25) {{\scriptsize $2$}};
\node at (-7+0.7*2 , -0.25) {{\scriptsize $4$}};
\node at (-7+0.7*3 , -0.25) {{\scriptsize $6$}};
\node at (-7+0.7*4 , -0.25) {{\scriptsize $8$}};
\node at (-7+0.7*5 , -0.25) {{\scriptsize $10$}};
\node at (-7+0.7*6 , -0.25) {{\scriptsize $12$}};
\node at (-7+0.7*7 , -0.25) {{\scriptsize $14$}};
\node at (-7+0.7*8 , -0.25) {{\scriptsize $16$}};
\node at (-7+0.7*9 , -0.25) {{\scriptsize $18$}};
\node at (-7+0.7*10 , -0.25) {{\scriptsize $20$}};
\node at (-7+0.7*11 , -0.25) {{\scriptsize $22$}};
\node at (-7+0.7*11.5 , 0.25) {{\scriptsize $t$}};
\filldraw[fill=gray!50,thick] (-7+5*0.7,2) rectangle (-7+8*0.7,3) node[midway] {{\scriptsize 6}};
\filldraw[fill=gray!20,thick] (-7+5.5*.7,0) rectangle (-7+8*0.7,1) node[midway] {{\scriptsize 5}}; 
\filldraw[fill=gray,thick] (-7+8*0.7,2) rectangle (-7+9*0.7,3) node[midway] {{\scriptsize 2}};
\filldraw[fill=gray!20,thick] (-7+8*0.7,0) rectangle (-7+10*0.7,1) node[midway] {{\scriptsize 4}};
\filldraw[fill=gray!50,thick] (-7+9*0.7,2) rectangle (-7+11*0.7,3) node[midway] {{\scriptsize 4}};

\end{tikzpicture}
\caption{Final optimal schedule for both the machines for the given sequence of jobs. The overall penalty of $32$ is reached, which is the best solution value as per Algorithm~\ref{main} and~\ref{parallel}.}
\label{popt}
\end{figure}
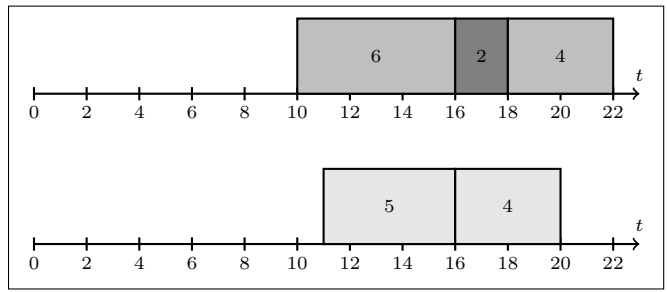

\section{Proof of Optimality}

We now prove the optimality of Algorithm~\ref{main} with respect to the solution value for the single machine case.

\begin{lemma}\label{lemma}
If the initial assignment of the completion times of the jobs ($C_i$), for a given sequence $J$ is done according to Equation~\eqref{initial}, then the optimal solution for this sequence can be obtained only by reducing the completion times of all the jobs or leaving them unchanged.
\end{lemma}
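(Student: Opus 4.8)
The plan is to combine the compactness property of Cheng and Kahlbacher with an elementary monotonicity observation about shifting the whole schedule to the right. First I would recall from~\cite{cheng} that for a fixed job sequence every optimal schedule is compact, i.e.\ contains no idle time between consecutive jobs. Since the sequence $J$ is fixed and the machine runs without preemption, the family of compact schedules for $J$ is one-dimensional: each such schedule is fully determined by the completion time $C_1$ of the first job, and the initialization in Equation~\eqref{initial} is the particular choice $C_1=\max\{P_1,D\}$. Any other compact schedule is obtained from the initial one by adding a uniform shift $\delta$ to all completion times, where $\delta\ge -ES$ is exactly the feasibility condition $C_1+\delta\ge P_1$ (recall $ES=C_1-P_1$). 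In this language, ``reducing the completion times or leaving them unchanged'' means $\delta\le 0$, so the lemma amounts to showing that the minimum of $Sol$ over $\delta\ge -ES$ is attained for some $\delta\le 0$.

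The key step is to show that no right shift ($\delta>0$) can decrease the objective. In the initial schedule every job completes at $C_i\ge C_1=\max\{P_1,D\}\ge D$, hence $E_i=0$ and $DT_i=C_i-D\ge 0$ for all $i$, so that $PL_i=\beta_i\ge 0$ for every job. Shifting all jobs to the right by $\delta>0$ replaces $C_i$ by $C_i+\delta>D$: the jobs remain non-early, every $E_i$ stays $0$, and every tardiness grows by $\delta$. Consequently $Sol$ changes by $\delta\sum_{i=1}^n\beta_i\ge 0$, i.e.\ it cannot decrease (and strictly increases as soon as some $\beta_i>0$). Putting the two observations together, the optimum of $Sol$ over the whole admissible range $\delta\ge -ES$ equals the optimum over $\delta\le 0$, and an optimal compact schedule — which exists by the Cheng--Kahlbacher property — can be chosen with $\delta\le 0$. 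That is precisely the claim that the optimum for the sequence $J$ is reached from~\eqref{initial} by reducing all completion times or leaving them unchanged.

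The hard part, such as it is, will be justifying the reduction to compact schedules, and that is exactly what the cited result of Cheng and Kahlbacher supplies; without it one would also have to exclude schedules that insert idle time between jobs (for instance to let an early job drift back toward $D$), and the clean one-parameter description by $C_1$ would no longer hold. Everything else is the monotonicity computation above, which becomes immediate once one notices that the initialization~\eqref{initial} leaves every job non-early, so that a rightward shift can only add tardiness and never remove earliness.
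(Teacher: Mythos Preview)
Your proposal is correct and follows essentially the same approach as the paper: both arguments observe that after the initialization~\eqref{initial} every job satisfies $C_i\ge D$, so a uniform right shift can only add tardiness and never remove earliness, whence the optimum must lie at $\delta\le 0$. Your version is slightly cleaner in that it handles the two cases $D>P_1$ and $D\le P_1$ uniformly via $C_1=\max\{P_1,D\}\ge D$ and makes explicit the appeal to compactness that reduces the search to the one-parameter family of uniform shifts, whereas the paper splits into the two cases and treats the reduction to uniform shifts as understood from the preceding discussion.
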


\begin{proof}
We prove the above lemma by considering two cases of Equation~\eqref{initial}. \\
\noindent
\textbf{Case 1:} $D > P_{1}$ \\
In this case Equation~\eqref{initial} will ensure that the first job is completed at the due-date and the following jobs are processed consecutively without any idle time. Moreover, with this assignment all the jobs will be tardy except for the first job which will be completed at the due date. The total penalty (say, $PN$) will be $\sum_{\substack{i=1}}^{n} (\beta_{i} \cdot T_{i})$, where $T_{i}=C_{i}-D$, $i=1,2,\dots,n$. Now if we increase the completion time of the first job by $x$ units then the new completion times $C_{i}^{\prime}$ for the jobs will be $C_{i} + x$ $\forall$ $i,(i=1,2,\dots,n)$ and the new total penalty $PN^{\prime}$ will be $\sum_{\substack{i=1}}^{n} (\beta_{i} \cdot T_{i}^{'})$, where $T_{i}^{'}=T_{i}+x$ $(i=1,2,\dots,n)$. Clearly, we have $PN^{\prime} > PN$ which proves that an increase in the completion times cannot fetch optimality which in turn proves that optimality can be achieved only by reducing the completion times or leaving them unchanged from Equation~\eqref{initial}.\\
\noindent
\textbf{Case 2:} $D \leq P_{1}$ \\
If the processing time of the first job in any given sequence is more than the due-date then all the jobs will be tardy including the first job as $P_1>D$. Since all the jobs are already tardy, a right shift ({\em i.e.\ \/}increasing the completion times) of the jobs will only increase the total penalty and hence worsening the solution. Moreover, a left shift ({\em i.e.\ \/}reducing the completion times) of the jobs is not possible either, because $C_{1}=P_{1}$, which means that the first job will start at time $0$. Hence, in such a case Equation~\eqref{initial} is the optimal solution. In the rest of the paper we avoid this simple case and assume that for any given sequence the processing time of the first job is less than the due-date.
\end{proof}

\begin{theorem}\label{theorem}
Algorithm~\ref{main} finds the optimal solution for a single machine common due date problem, for a given job sequence.
\end{theorem}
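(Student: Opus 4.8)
The plan is to recast the search, for the fixed input sequence $J$, as a one‑dimensional convex minimization. By the compactness property of Cheng and Kahlbacher~\cite{cheng} invoked earlier, it suffices to consider schedules with no idle time; such a schedule is completely determined by the start time $s\ge 0$ of the first job, with $C_i=C_i(s)=s+\sum_{k=1}^{i}P_k$. Let $f(s)$ denote the objective at start time $s$; by Equation~\eqref{obn} we have $f(s)=\sum_{i=1}^{n}DT_i(s)\,PL_i(s)=\sum_{i=1}^{n}(\alpha_i\max\{0,D-C_i(s)\}+\beta_i\max\{0,C_i(s)-D\})$. Each summand is a convex, piecewise‑linear function of $C_i$ (slope $-\alpha_i$ while $C_i\le D$, slope $\beta_i$ while $C_i\ge D$) composed with the affine map $s\mapsto C_i(s)$, so $f$ is convex and piecewise linear on $[0,\infty)$. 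Its only breakpoints are the values $s_k:=D-\sum_{i=1}^{k}P_i$ that lie in $[0,\infty)$ — the start times at which job $k$ finishes exactly at $D$ — together with the domain endpoint $s=0$; moreover $f(s)\to\infty$ as $s\to\infty$. Hence the minimum of $f$ over $[0,\infty)$ is attained at one of the finitely many points $s_1>s_2>\dots>s_K\ge 0$ or at $s=0$, and by Lemma~\ref{lemma} it is never attained to the right of the initial configuration $s_1=\max\{D-P_1,0\}$ (equivalently, just right of $s_1$ every job is tardy, so $f$ increases there).

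Next I would verify that Algorithm~\ref{main} evaluates $f$ at exactly these candidate points, scanned in decreasing order of $s$. The initialization of Equation~\eqref{initial} puts the first job's completion at $\max\{P_1,D\}$, i.e.\ $s=s_1$ in the nontrivial case $D>P_1$; the case $D\le P_1$ is settled exactly as in Lemma~\ref{lemma}, where $s=0$ is forced and optimal and every shift the algorithm makes is $0$. At the start of iteration $j$ the current start time is $s_{j-1}$, so $ES=C_1-P_1=s_{j-1}$, job $j$ is the first job still finishing after $D$, and $DT_j=C_j-D=s_{j-1}-s_j$. Therefore the shift $\min\{ES,DT_j\}$ equals $s_{j-1}-s_j$ and lands on the next breakpoint $s_j$ precisely when $s_j\ge 0$; otherwise it equals $s_{j-1}=ES$, lands on the endpoint $s=0$, makes $ES=0$, and all subsequent shifts are $0$ so the loop exits. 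In the restrictive case $\sum_i P_i>D$ we have $s_n<0$, so the endpoint is reached in the loop unless the termination test fires earlier; and no feasible candidate is skipped, because every omitted $s_j$ is negative. Thus the list of values the algorithm produces is exactly $f(s_1),f(s_2),\dots$ evaluated along the candidate points, ending at $s=0$ or earlier.

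Finally I would close the argument by convexity. Along a strictly decreasing sequence of points a convex function has valley‑shaped values: if $f(s_j)\ge f(s_{j-1})$ for some $j$, then for every $c<s_j$ the average slope of $f$ on $[c,s_j]$ is at most the (nonpositive) average slope on $[s_j,s_{j-1}]$, whence $f(c)\ge f(s_j)\ge f(s_{j-1})$. The algorithm keeps in $Sol$ the minimum of the values seen, continuing only while they strictly decrease, so it halts at the first $j$ with $f(s_j)\ge f(s_{j-1})$ and returns $f(s_{j-1})$; by the displayed remark this value is $\le f(s_k)$ for all later candidates, $\le f$ on every intervening linear piece (an affine function on an interval is minimized at an endpoint), and $\le f(s)$ for $s\ge s_1$ by Lemma~\ref{lemma}, hence equals $\min_{s\ge 0}f(s)$. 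If instead $f$ keeps strictly decreasing until $s=0$ is reached, the same slope comparison shows $f$ is decreasing on all of $[0,s_1]$, so $f(0)$ is optimal — and $f(0)$ is precisely what the algorithm returns in that case. Either way Algorithm~\ref{main} outputs the optimal objective value for the sequence $J$. The step I expect to be the main obstacle is the second one: rigorously matching the iterates to the breakpoints (with none skipped and the $ES$‑limited step correctly hitting the boundary), and checking that the ``Update $PL,DT,ES$'' bookkeeping keeps $\sum_i DT_i\,PL_i$ equal to $f$ at the current start time as jobs migrate across the due date; the convexity argument itself is short once the reduction is in place.
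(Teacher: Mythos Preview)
Your plan is correct and takes a genuinely different route from the paper. The paper argues in two pieces: it invokes an external structural fact of Cheng and Kahlbacher (in an optimal restrictive schedule some job starts at time~$0$ or some job ends at~$D$) to justify that only the breakpoints $\min\{ES,DT_j\}$ need be tried, and then proves a separate Lemma (Lemma~\ref{lemma3}) showing that once $V_{j+1}>V_j$ the earliness increment already dominates the tardiness decrement, so no further improvement is possible. You instead package both facts into one observation: $f(s)$ is convex piecewise linear in the single free variable $s$, so its minimizers lie at the breakpoints $s_k=D-\sum_{i\le k}P_i$ or at $s=0$, and the valley shape of a convex function along a monotone grid immediately gives the early--termination guarantee. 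What your approach buys is self--containment and a cleaner stopping argument (no appeal to the Cheng--Kahlbacher optimal--structure theorem, and Lemma~\ref{lemma3} becomes a one--line convexity consequence); what the paper's approach buys is that it stays closer to the scheduling semantics and makes the ``either start at $0$ or finish at $D$'' intuition explicit. Your identification of the delicate step is accurate: the only thing requiring care is the bookkeeping check that iteration~$j$ moves from $s_{j-1}$ to $\min\{s_j,0\}$ with nothing skipped and that $\sum_i DT_i\,PL_i$ tracks $f$ across the update---your sketch of that computation ($ES=s_{j-1}$, $DT_j=s_{j-1}-s_j$) is right, and once $ES$ hits~$0$ all further shifts vanish, so the loop exits on the next comparison.
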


\begin{proof}
The initialization of the completion times for a sequence $P$ is done according to Lemma~\ref{lemma}.  It is evident from Equation~\eqref{initial} that the deviation from the due date ($DT_{i}$) is zero for the first job and greater than zero for all the following jobs. Besides, $DT_i < DT_{i+1}$ for $i=1,2,3,\dots,n-1$, since $C_{i}<C_{i+1}$ from Equation~\eqref{initial} and $DT_{i}$ is defined as $DT_{i}=C_{i}-D$. By Lemma~\ref{lemma} the optimal solution for this sequence can be achieved only by reducing the completion times of all the jobs simultaneously or leaving the completion times unchanged. Besides, a reduction of the completion times is possible only if $ES>0$ since there is no idle time between any jobs.

The total penalty after the initialization is $PN=\sum_{\substack{i=1}}^{n} (\beta_{i} \cdot T_{i})$ since none of the jobs are completed before the due date. According to Algorithm~\ref{main} the completion times of all the jobs is reduced by $\min\{ES,DT_{j}\}$ at any iteration. Since $DT_1=0$, there will be no loss or gain for $j=1$. After any iteration of the $while$ loop in line $5$, we decrease the total weighted tardiness but gain some weighted earliness penalty for some jobs. A reduction of the completion times by $\min\{ES,DT_{j}\}$ is the best non-greedy reduction. Let $\min\{ES,DT_{j}\}>0$ and $t$ be a number between $0$ and $\min\{ES,DT_{j}\}$. Then reducing the completion times by $t$ will increase the number of early jobs by one and reduce the number of tardy jobs by one. With this operation, if there is an improvement to the overall solution, then a reduction by $\min\{ES,DT_{j}\}$ will fetch a much better solution ($V_{j}$) because reducing the completion times by $t$ will lead to a situation where none of the jobs either start at time $0$ (because $ES>0$) nor any of the jobs finish at the due date since the jobs $1,2,3,\dots,j-1$ are early, jobs $j,j+1,\dots,n$ are tardy and the new completion time of job $j$ is $C_{j}^{'}=C_{j}-t$. 

Since after this reduction $DT_{j}>0$ and $DT_{j}<DT_{j+1}$ for $j=1,2,3,\dots,n-1$, none of the jobs will finish at the due date after a reduction by $t$ units. Moreover, it was proved by Cheng \emph{et al.}~\cite{cheng} that in an optimal schedule for the restrictive common due date, either one of the jobs should start at time $0$ or one of the jobs should end at the due date. This case can occur only if we reduce the completion times by $\min\{ES,DT_{j}\}$. If $ES<DT_{j}$, the first job will start at time $0$ and if $DT_j<ES$ then one of the jobs will end at the due date. In the next iterations we continue the reductions as long as we get an improvement in the solution and once the new solution is not better than the previous best, we do not need to check any further and we have our optimal solution. This can be proved by considering the values of the objective function at the indices of two iterations; $j$ and $j+1$. Let $V_{j}$ and $V_{j+1}$ be the value of the objective function at these two indices, then the solution cannot be improved any further if $V_{j+1}>V_{j}$ by Lemma~\ref{lemma3}. \end{proof}

\begin{lemma}\label{lemma3}
Once the value of the solution at any iteration $j$ is less than the value at iteration $j+1$, the solution cannot be improved any further.

\end{lemma}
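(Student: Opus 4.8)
The plan is to recast the total penalty as a convex function of a single real parameter — the common amount by which all completion times have been shifted left from the initialization~\eqref{initial} — and then to observe that Algorithm~\ref{main} merely evaluates this function at a monotone sequence of arguments. Concretely, for $s\in[0,ES]$ let $f(s)$ denote the objective~\eqref{obn} on the schedule in which every completion time from~\eqref{initial} is decreased by $s$. Writing $DT_i^{0}=C_i-D$ for the initial deviations, the contribution of job $i$ to $f(s)$ is $\alpha_i\max\{0,s-DT_i^{0}\}+\beta_i\max\{0,DT_i^{0}-s\}$, a convex piecewise‑linear "V" with vertex at $s=DT_i^{0}$ (using $\alpha_i,\beta_i\ge 0$). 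Hence $f$ is convex and piecewise linear on $[0,ES]$, its breakpoints lying among $\{DT_1^{0},\dots,DT_n^{0}\}$ together with the endpoints $0$ and $ES$. By Lemma~\ref{lemma}, some optimal schedule for the given sequence corresponds to a value $s\in[0,ES]$, so our problem is exactly to minimise $f$ over $[0,ES]$.

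Next I would match the algorithm to $f$. Using that $DT_i^{0}$ is strictly increasing in $i$ (noted in the proof of Theorem~\ref{theorem}, and immediate from~\eqref{initial}), a short induction on $j$ shows that after the $j$th pass through the \textbf{while} loop the total amount subtracted from every $C_i$ equals $\sigma_j:=\min\{ES,DT_j^{0}\}$, that each individual reduction $\min\{ES,DT_j\}$ performed along the way is nonnegative, and therefore that $V_j=f(\sigma_j)$, while the initial $Sol$ equals $f(0)$. Since $DT_j^{0}$ increases with $j$, the arguments satisfy $0\le\sigma_2\le\sigma_3\le\cdots\le\sigma_n$, and the set $\{0,\sigma_2,\dots,\sigma_n\}$ contains every breakpoint of $f$ in $[0,ES]$ (in the restrictive case $\sigma_n=ES$; otherwise $f$ is non‑decreasing beyond $\sigma_n$, all jobs then being early). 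In particular a global minimiser of $f$ over $[0,ES]$ occurs among $f(0),f(\sigma_2),\dots,f(\sigma_n)$.

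Convexity then finishes it. The hypothesis $V_j<V_{j+1}$ forces $\sigma_j<\sigma_{j+1}$ (equal arguments would give equal values), and for $\sigma_j\le\sigma_{j+1}\le\sigma_k<\sigma_{k+1}$ convexity of $f$ yields $V_{k+1}-V_k=f(\sigma_{k+1})-f(\sigma_k)\ge f(\sigma_{j+1})-f(\sigma_j)=V_{j+1}-V_j>0$; and if the $\sigma$'s coincide from some index on (saturating at $ES$), the corresponding $V$'s are merely constant from that point. Either way $V_k\ge V_{j+1}>V_j$ for all $k>j$, so $\min_k V_k$ is attained no later than iteration $j$ and equals the value already stored in $Sol$, which by the previous paragraph is the global minimum of $f$. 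Hence the solution cannot be improved once $V_j<V_{j+1}$, which is Lemma~\ref{lemma3}.

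The convexity deduction in the last step is routine; the part I expect to require the most care is the bookkeeping of the middle step — rigorously pinning down $\sigma_j=\min\{ES,DT_j^{0}\}$ as the cumulative shift, verifying that every individual shift $\min\{ES,DT_j\}$ is nonnegative, establishing the identity $V_j=f(\sigma_j)$, and handling the corner case in which the $\sigma_j$ saturate at $ES$ (or, in the non‑restrictive regime, stop short of it). Everything else follows from convexity of a sum of V‑shaped functions together with the monotonicity of the $DT_i^{0}$ that was already recorded in the proof of Theorem~\ref{theorem}.
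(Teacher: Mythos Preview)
Your approach is sound and takes a genuinely different route from the paper. The paper argues directly at the level of penalty rates: from $V_{j+1}>V_j$ it extracts $\sum_{i\le j+1}\alpha_i>\sum_{i\ge j+2}\beta_i$, then observes that any further left shift only moves jobs from the tardy side to the early side, so this imbalance can only widen and no improvement is possible. You instead package the problem as minimising a single convex piecewise-linear function $f(s)$ of the cumulative shift, identify the algorithm's iterates as $V_j=f(\sigma_j)$ along a monotone sequence of arguments, and invoke convexity. Your framing is more conceptual and buys a little extra: it makes transparent why sampling $f$ at the points $\{0\}\cup\{\min(ES,DT_j^{0})\}$ is guaranteed to locate the global minimum --- something the paper establishes separately inside Theorem~\ref{theorem} via the Cheng--Kahlbacher structural property. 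The paper's rate computation is, in effect, the statement that the slope of your $f$ is non-decreasing, carried out by hand without naming convexity.

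One correction to make in the execution: the difference inequality $f(\sigma_{k+1})-f(\sigma_k)\ge f(\sigma_{j+1})-f(\sigma_j)$ does \emph{not} follow from convexity when the interval lengths $\sigma_{k+1}-\sigma_k$ and $\sigma_{j+1}-\sigma_j$ differ. What convexity gives you is the slope inequality
\[
\frac{f(\sigma_{k+1})-f(\sigma_k)}{\sigma_{k+1}-\sigma_k}\ \ge\ \frac{f(\sigma_{j+1})-f(\sigma_j)}{\sigma_{j+1}-\sigma_j}\ >\ 0,
\]
whence $f(\sigma_{k+1})>f(\sigma_k)$ whenever $\sigma_{k+1}>\sigma_k$; telescoping then yields $V_k\ge V_{j+1}>V_j$ for all $k>j$, which is all you need. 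With that adjustment your argument goes through, and the bookkeeping you flagged (the induction giving $\sigma_j=\min\{ES,DT_j^{0}\}$ and the saturation at $ES$) is exactly the right place to spend the care.
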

\begin{proof}
If $V_{j+1}>V_{j}$, a further left shift of the jobs does not fetch a better solution. Note that the objective function has two parts: penalty due to earliness and penalty due to tardiness. Let us consider the earliness and tardiness of the jobs after the $j$th iterations are $E_{i}^{j}$ and $T_{i}^{j}$ for $i=1,2,\dots,n$. Then we have $V_{j}=\sum_{\substack{i=1}}^{n} (\alpha_{i}E_{i}^{j}+\beta_{i}T_{i}^{j})$ and $V^{j+1}=\sum_{\substack{i=1}}^{n} (\alpha_{i}E_{i}^{j+1}+\beta_{i}T_{i}^{j+1})$. Besides, after every iteration of the $while$ loop in Algorithm~\ref{main}, the completion times are reduced or in other words the jobs are shifted left. This leads to an increase in the earliness and a decrease in the tardiness of the jobs. Let's say, the difference in the reduction between $V^{j+1}$ and $V^{j}$ is $x$. Then we have $E^{j+1}=E^{j}+x$ and $T_{j+1}=T_{j}-x$. Since $V^{j+1}>V^{j}$, we have: $\sum_{\substack{i=1}}^{n} (\alpha_{i}E_{i}^{j+1}+\beta_{i}T_{i}^{j+1}) > \sum_{\substack{i=1}}^{n} (\alpha_{i}E_{i}^{j}+\beta_{i}T_{i}^{j})$. By substituting the values of $E^{j+1}$ and $T^{j+1}$ we get,  $\sum_{\substack{i=1}}^{j+1} \alpha_{i}x > \sum_{\substack{i=j+2}}^{n} \beta_{i}x$. Hence, at the $(j+1)^{th}$ iteration the total penalty due to earliness exceeds the total penalty due to tardiness. This proves that for any further reduction there cannot be an improvement in the solution because a decrease in the tardiness penalty will always be less than the increase in the earliness penalty. 
\end{proof}

\section{Algorithm Run-Time Complexity} 
In this Section we study and prove the run-time complexity of the Algorithms~\ref{main} and~\ref{parallel}. We calculate the complexities of all the algorithms separately considering the worst cases for all. Let $T_1$ and $T_2$ be the run-time complexities of the algorithms respectively.

\begin{lemma}
The run-time complexities of both Algorithms~\ref{main} and~\ref{parallel} are $O(n^2)$, where $n$ is the total number of jobs.
\end{lemma}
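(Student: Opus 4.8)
The plan is to bound the number of elementary operations in each algorithm directly from their pseudocode. For Algorithm~\ref{main}, I would first observe that the initialization step (computing all $C_i$ via Equation~\eqref{initial}) and the initial computation of $PL$, $DT$, $ES$, and $Sol$ each require a single pass over the $n$ jobs, hence $O(n)$ work. The core of the argument is the \texttt{while} loop: the loop index $j$ runs from $2$ to at most $n+1$, so there are at most $n$ iterations. Inside each iteration, the update of every $C_i$, the recomputation of $PL$, $DT$, $ES$, and the evaluation of $V_j=\sum_{i=1}^n DT_i\cdot PL_i$ each cost $O(n)$. Therefore each iteration costs $O(n)$, and the whole loop costs $O(n)\cdot O(n)=O(n^2)$, which dominates the $O(n)$ initialization. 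This gives $T_1=O(n^2)$.

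For Algorithm~\ref{parallel}, I would argue similarly. The three initialization loops over the machines are $O(m)$ each, and since $m\le n$ this is $O(n)$. The second \texttt{for} loop runs over jobs $i=m+1,\dots,n$, i.e.\ at most $n$ iterations; the only nontrivial step inside is \textbf{Compute} $\lambda=\argmin_{j} M_j$, which scans all $m$ machines and costs $O(m)=O(n)$, while the remaining updates are $O(1)$. So this loop costs $O(n^2)$. Finally, the last \texttt{for} loop invokes Algorithm~\ref{main} once per machine; machine $j$ has $n_j$ jobs, so by the bound just established that call costs $O(n_j^2)\le O(n_j\cdot n)$, and summing over machines gives $\sum_{j=1}^m O(n_j^2)\le O\!\big(n\sum_{j=1}^m n_j\big)=O(n^2)$ since $\sum_j n_j=n$. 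Combining the three phases, $T_2=O(n^2)$.

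The one genuine subtlety — and the step I expect to require the most care — is the termination bound on the \texttt{while} loop of Algorithm~\ref{main}: a priori the loop could fail to terminate if the guard $j<n+1$ were the only stopping condition and the \textbf{go to return} branch were never taken; what actually bounds the iteration count is that $j$ is incremented each pass and the loop exits once $j$ reaches $n+1$, so at most $n-1$ iterations occur regardless of whether the early exit fires. Making this explicit (and noting that the early \Goto{return} can only reduce the count) is what rigorously pins the loop at $O(n)$ iterations. Everything else is a routine per-pass accounting of $O(n)$-cost vector operations, and the bound $m\le n$ to fold the machine count into $n$.

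\begin{proof}
\textbf{Complexity of Algorithm~\ref{main}.}
Initializing $C_i$ for all $i$ via Equation~\eqref{initial} takes $O(n)$ time, as does computing $PL$, $DT$, $ES$, and the initial sum $Sol=\sum_{i=1}^n DT_i\cdot PL_i$. In the \texttt{while} loop, the index $j$ starts at $2$ and is incremented by $1$ at the end of each iteration, and the loop terminates as soon as $j=n+1$ (or earlier, if the \textbf{go to return} branch is taken). Hence the loop executes at most $n-1$ times. Each iteration performs: a reduction of $C_i$ for all $i$, an update of $PL$, $DT$, $ES$, and the evaluation of $V_j=\sum_{i=1}^n DT_i\cdot PL_i$; each of these is a single pass over the $n$ jobs and costs $O(n)$. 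Thus each iteration costs $O(n)$ and the loop costs $O(n)\cdot O(n)=O(n^2)$, which subsumes the $O(n)$ initialization. Therefore $T_1=O(n^2)$.

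\textbf{Complexity of Algorithm~\ref{parallel}.}
The initial assignments $M_j\gets 0$, $n_j\gets 1$, and the first \texttt{for} loop over $j=1,\dots,m$ each cost $O(m)$; since $m\le n$ this is $O(n)$. The second \texttt{for} loop runs over $i=m+1,\dots,n$, i.e.\ at most $n$ iterations; within each iteration the step \textbf{Compute} $\lambda=\argmin_{j=1,\dots,m}M_j$ scans the $m$ machines in $O(m)=O(n)$ time, while the updates of $n_\lambda$, $W_\lambda^{n_\lambda}$, and $M_\lambda$ are $O(1)$. Hence this loop costs $O(n^2)$. Finally, the last \texttt{for} loop applies Algorithm~\ref{main} once to each machine $j$, which processes $n_j$ jobs; by the bound just proved this call costs $O(n_j^2)\le O(n\cdot n_j)$. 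Summing over all machines and using $\sum_{j=1}^m n_j=n$ gives $\sum_{j=1}^m O(n_j^2)\le O\!\big(n\sum_{j=1}^m n_j\big)=O(n^2)$. Adding the three phases, $T_2=O(n^2)$.
\end{proof}
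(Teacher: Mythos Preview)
Your proof is correct and follows essentially the same approach as the paper: bounding Algorithm~\ref{main} by $O(n)$ iterations of the \texttt{while} loop at $O(n)$ cost each, and for Algorithm~\ref{parallel} splitting the work into the $O(m+(n-m)m)$ assignment phase plus $\sum_j O(n_j^2)=O(n^2)$ for the per-machine calls, using $\sum_j n_j=n$ and $m\le n$. If anything, your explicit treatment of the loop-termination bound (that $j$ is unconditionally incremented, so at most $n-1$ passes occur regardless of the early exit) is more careful than the paper's own argument, which simply asserts the worst case without justifying it.
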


\begin{proof}

As for Algorithm~\ref{main}, the calculations involved in the initialization step and evaluation of $PL,DT,ES,Sol$ are all of $O(n)$ complexity and their evaluation is irrespective of the any conditions unlike inside the $while$ loop. The $while$ loop again evaluates and updates these parameters at every step of its iteration and returns the output once their is no improvement possible. The worst case will occur when the $while$ loop is iterated over all the values of $j$, $j=2,3,\dots,n$. Hence the complexity of Algorithm~\ref{main} is $O(n^2)$ with $n$ being the number of jobs processed by the machine. Hence, $T_1=O(n^2)$.

Let $m$ be the number of machines, then in the Algorithm~\ref{parallel}, the complexity for the first two \textit{for} loops is $O(m + (n-m)m)$ where, $O(m)$ corresponds to the first \textit{for} loop and $O((n-m)m)$ corresponds to the second \textit{for} loop involving the calculation of $\lambda$. For the last \textit{for} loop, we need to consider all the cases of the number of jobs processed by each machine. 

Let $x_1,x_2,x_3,\dots,x_m$ be the number of jobs processed by the machines, respectively. Then, $\sum_{\substack{i=1}}^{m} x_i=n$. We make a reasonable assumption that the number of machines is less than the number of jobs, which is usually the case. In such a case the complexity of Algorithm~\ref{parallel} ($T_2$) is equal to  $O(m+ nm - m^2)+ \sum_{\substack{i=1}}^{m}O(x_{i}^{2})$. Since $\sum_{\substack{i=1}}^{m} x_{i} = n$, we have $\sum_{\substack{i=1}}^{m}O(x_{i}^{2})=O(n^2)$. Thus the complexity of Algorithm~\ref{parallel} is $O(m+nm-m^2+n^2)$. Since we assume $m<n$ we have $T_2 = O(n^2)$.
\end{proof}

\section{Exponential Search: An Efficient Implementation of Algorithm~\ref{main}}
Algorithm~\ref{main} shifts the jobs to the left by reducing the completion times of all the jobs by $\min\{ES,DT_{j}\}$ on every iteration of the $while$ loop. The runtime complexity of the algorithm can be improved form $O(n^2)$ to $O(n\log n)$ by implementing an exponential search instead of a step by step reduction, as in Algorithm~\ref{main}. To explain this we first need to understand the slope of the objective function values for each iteration. In the proof of optimality of Algorithm~\ref{main}, we proved that there is only one minimum present in $V^{j}$ $\forall j$. Besides, the value of $DT_{j}$ increases for every $j$ as it depends on the completion times. Also note that the reduction in the completion times is made by $\min\{ES,DT_{j}\}$. Hence, if for any $j$, $ES \leq DT_{j}$ then every iteration after $j$ will fetch the same objective function value, $V^{j}$. Hence, the solution values after each iteration will have a trend as shown below in Figure~\ref{trend}.

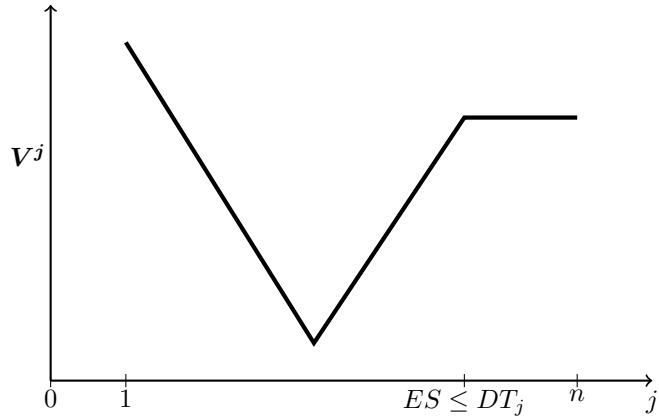
\begin{figure}[ht]
\centering
\begin{tikzpicture}
\draw[thick,->] (0,0) -- (8,0) node[anchor=north] {$j$};
\draw[snake=ticks,segment length=1cm ] (0,0) -- (1,0);
\draw[snake=ticks,segment length=1cm ] (5.5,0) -- (5.5,0);
\draw[snake=ticks,segment length=1cm ] (7,0) -- (7,0);
\draw	(0,0) node[anchor=north] {$0$}
		(1,0) node[anchor=north] {$1$}
		(5.5,0) node[anchor=north] {$ES \leq DT_j $}
		(7,0) node[anchor=north] {$n$};
\draw[thick,->] (0,0) -- (0,5);
\draw	(-0.3,3) node{$\boldsymbol{V^j}$};
\draw[ultra thick] (1,4.5) -- (3.5,0.5) -- (5.5,3.5) -- (7,3.5);

%\draw (4pt,4pt) (1,4.5) -- (3.5,0.5) -- (5.5,3.5) -- (7,3.5);
\end{tikzpicture}
\caption{The trend of the solution value against each iteration of Algorithm~\ref{main}, for a job sequence. The value of the solution does not improve any further after a certain number of reductions.}
\label{trend}
\end{figure}

With such a slope of the solution we can use the exponential search as opposed to a step by step search, which will in turn improve the run-time complexity of Algorithm~\ref{main}. This can be achieved by increasing or decreasing the step size of the $while$ loop by orders of $2$ ({\em i.e.\ \/}$2,2^2,2^3,\dots,n$) while keeping track of the slope of the solution. The index of the next iteration should be increased if the slope is negative and decreased if the slope is non-negative. At each step we need to keep track of the previous two indices and once the difference between the indices is less than the minimum of the two, then we need to perform binary search on the same lines. The optimum will be reached if both the adjacent solutions are greater than the current value. In this methodology we do not need to search for all values of $j$ but in steps of $2^j$. Hence the run-time complexity with exponential search will be $O(n\log n)$ for both the single machine and parallel machine cases.

\section{A Dynamic Case of CDD}
In this Section we discuss about a dynamic case of the common due-date problem for the single machine case at the planning stage. Consider the case when an optimal schedule has been calculated for a certain number jobs, and then an unknown number of jobs with unknown processing times arrive later. We assume that the original schedule is not disturbed and the new sequence of jobs can be processed after the first set of jobs. We show that in such a case the optimal schedule for the new extended job sequence can be achieved only by further reducing the completion times of all the jobs. We would like to emphasize here that we are considering the dynamic case at the planning stage when none of the jobs of the original known job sequence has gone to the processing stage.

Let us assume that at any given point of time there are a certain number of jobs ($n$) in a sequence $J$, for which the optimal schedule against a common due-date $D$ on a machine has been already calculated using Algorithm~\ref{main}. In such a case, if there are some additional jobs $n^\prime$ in a sequence $J^\prime$ to be processed against the same due-date and by the same machine without disturbing the sequence $J$, the optimum solution for the new sequence of $n+n^\prime$ jobs in the extended sequence $J+J^\prime$\footnote{$J$ and $J^\prime$ are two disjoint sets of jobs, hence $J+J^\prime$ is the union of two sets maintaining the job sequences in each set.} can be found by further reducing the completion times of jobs in $J$ and the same reduction in the completion times of jobs in $J^\prime$ using Algorithm~\ref{main}. We prove it using Lemma~\ref{dynamic}. 

\begin{lemma}\label{dynamic}
Let, $C_i$ $(i=1,2,\dots,n)$ be the optimal completion times of jobs in sequence $J$ and $C_j^\prime$ $(j=1,2,\dots,n,n+1,\dots,n+n^\prime -1,n+n^\prime)$ be the optimal completion times of jobs in the extended job sequence $J+J^\prime$ with $n+n^\prime$ jobs. Then,
\begin{itemize}
\item[i)] $\exists$ $\gamma \geq 0$ s.t. $C_{i} - C_{i}^\prime = \gamma$ for $i=1,2,\dots,n$
\item[ii)] $ C_{k}^\prime = C_{n} -\gamma + \sum_{\substack{\tau=n+1}}^{k} P_{\tau}$, $(k = n+1,n+2,\dots,n+n^\prime)$\;.
\end{itemize}
\end{lemma}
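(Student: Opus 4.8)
The plan is to combine the structural description of optimal schedules supplied by Lemma~\ref{lemma} with the compactness property of Cheng and Kahlbacher: for any sequence, an optimal schedule is compact and hence equals the initialization of Equation~\eqref{initial} with \emph{all} completion times reduced by one common amount (this is precisely what Algorithm~\ref{main} does at each pass of its $while$ loop). Let $C_i^{0}$ be the initial completion times of $J$ from Equation~\eqref{initial} and $\widetilde C_i^{0}$ those of $J+J^\prime$. Since $J$ is a prefix of $J+J^\prime$ with identical processing times, $\widetilde C_i^{0}=C_i^{0}$ for $i=1,\dots,n$ and $\widetilde C_k^{0}=C_n^{0}+\sum_{\tau=n+1}^{k}P_\tau$ for $k>n$. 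Writing $\Delta\ge 0$ and $\Delta^\prime\ge 0$ for the total left shifts produced by Algorithm~\ref{main} on $J$ and on $J+J^\prime$, we get $C_i=C_i^{0}-\Delta$ and $C_i^\prime=C_i^{0}-\Delta^\prime$ for $i\le n$, and $C_k^\prime=C_n^{0}+\sum_{\tau=n+1}^{k}P_\tau-\Delta^\prime$ for $k>n$. Setting $\gamma:=\Delta^\prime-\Delta$ gives part~(i) immediately ($C_i-C_i^\prime=\Delta^\prime-\Delta=\gamma$ for every $i\le n$), and substituting $C_n^{0}=C_n+\Delta$ gives part~(ii). So the entire lemma reduces to the single claim $\gamma\ge 0$: appending jobs can only enlarge, never shrink, the optimal common left shift.

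To prove $\gamma\ge 0$, view the total penalty as a function of the common left shift $\delta$ over the feasible interval $[0,\,D-P_1]$ (the first job is unchanged and cannot start before time $0$; the degenerate case $D\le P_1$, where no shift is possible, is trivial and is excluded as elsewhere in the paper). Write $g(\delta)=f(\delta)+h(\delta)$, where $f(\delta)$ is the penalty of $J$ alone under the shift $\delta$ and $h(\delta)$ collects the contribution of the appended jobs $n+1,\dots,n+n^\prime$. The crucial observation is that every appended job stays tardy over all of $[0,\,D-P_1]$: under such a shift the completion time of job $k>n$ is at least $P_1+P_2+\dots+P_k\ge\sum_{\tau=1}^{n+1}P_\tau>\sum_{\tau=1}^{n}P_\tau\ge D$, the penultimate strict inequality using $P_{n+1}>0$ and the last one being exactly the restrictive-case hypothesis $\sum_{\tau=1}^{n}P_\tau>D$. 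Hence on $[0,\,D-P_1]$ we have $h(\delta)=\sum_{k=n+1}^{n+n^\prime}\beta_k\bigl(\widetilde C_k^{0}-D-\delta\bigr)$, an affine function with strictly negative slope $-\sum_{k>n}\beta_k<0$ (the penalties $\beta_k$ being positive).

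Finally, recall from the proofs of Theorem~\ref{theorem} and Lemma~\ref{lemma3} (and the decreasing-then-increasing-then-flat trend of Figure~\ref{trend}) that $f$ is unimodal on $[0,\,D-P_1]$: non-increasing up to its minimiser $\Delta$, non-decreasing afterwards. Adding the strictly decreasing affine $h$ keeps $g$ strictly decreasing on $[0,\Delta]$, so every minimiser $\Delta^\prime$ of $g$ satisfies $\Delta^\prime\ge\Delta$, i.e. $\gamma\ge 0$, as needed. An equivalent algorithm-level phrasing I would also record: within $[0,\,D-P_1]$ the breakpoints of $g$ coincide with those of $f$ (by the inequality above the appended jobs create breakpoints only beyond $D-P_1$), and at the $j$-th breakpoint the value for $J+J^\prime$ equals $V^{j}$ minus a quantity that strictly increases with $j$; hence whenever the run of Algorithm~\ref{main} on $J$ has not yet stopped (the $V^{j}$ still non-increasing), the corresponding increments for $J+J^\prime$ are strictly negative, so that run has not stopped either, again giving $\Delta^\prime\ge\Delta$. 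The only delicate point is this monotonicity of the optimal shift, and it rests entirely on the restrictive-case inequality $\sum_{\tau=1}^{n}P_\tau>D$, which keeps the appended jobs tardy throughout the admissible range of shifts; once that is in hand, parts~(i) and~(ii) follow from the substitutions above.
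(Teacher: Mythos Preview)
Your proof is correct and rests on the same key observation as the paper's: under the restrictive hypothesis $\sum_{\tau=1}^{n}P_\tau>D$, every appended job remains tardy over the entire feasible range of shifts, so the contribution $h(\delta)$ of $J^\prime$ is (affine) decreasing in $\delta$, and therefore the optimal shift for $J+J^\prime$ cannot lie to the right of the optimal shift for $J$.

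Where you differ from the paper is in framing. The paper argues \emph{locally}: starting from the $J$-optimal schedule with $J^\prime$ appended, it shows that a right shift by $x>0$ cannot decrease the total penalty, using the optimality of $Sol_J$ to obtain $\sum_i(\beta_i-\alpha_i)x\ge 0$ and the tardiness of $J^\prime$ to obtain $\delta\ge 0$. You instead argue \emph{globally}: you invoke the unimodality of $f$ established in Theorem~\ref{theorem}/Lemma~\ref{lemma3} (the trend of Figure~\ref{trend}) and add the strictly decreasing $h$, concluding that $g=f+h$ is strictly decreasing on $[0,\Delta]$, so its minimiser lies in $[\Delta,\,D-P_1]$. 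Your route is a bit cleaner---it avoids the paper's somewhat loose bookkeeping with $E_i-x$ and $T_i+x$ (which tacitly assumes no job changes status under the shift)---and it makes explicit why the restrictive-case inequality is needed. The paper's route is shorter and uses only the first-order optimality of $Sol_J$ rather than the full unimodal shape. Either way, once $\gamma\ge 0$ is established, parts~(i) and~(ii) follow exactly as you derive them from the common-initialization identities $\widetilde C_i^{0}=C_i^{0}$ and $\widetilde C_k^{0}=C_n^{0}+\sum_{\tau>n}^{k}P_\tau$.
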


\begin{proof}
Let $Sol_J$ denote the optimal solution for the job sequence $J$. This optimal value for sequence $J$ is calculated using Algorithm~\ref{main} which is optimal according to Theorem~\ref{theorem}. In the optimal solution let the individual penalties for earliness and tardiness be $E_{i}$ and $T_{i}$, respectively, hence 
\begin{equation}~\label{solj}
Sol_{J}=\sum_{\substack{i=1}}^{n} (\alpha_{i}E_{i}+\beta_{i}T_{i})\;.
\end{equation}
Clearly, the value of $Sol_{J}$ cannot be improved by either reducing the completion times any further as explained in Theorem~\ref{theorem}. Now, processing an additional job sequence $J^\prime$ starting from $C_{n}$ (the completion time of the last job in $J$) means that for the new extended sequence $J+J^\prime$ the tardiness penalty increases further by some value, say $PT_{J'}$. Besides, the due date remains the same, the sequence $J$ is not disturbed and all the jobs in the sequence $J'$ are tardy. Hence the new solution value (say $V_{J+J'}$) for the new sequence $J+J'$ will be

\begin{equation}~\label{vj+j'}
V_{J+J'} = Sol_{J} + PT_{J'}\;.
\end{equation}

For this new sequence we do not need to increase the completion times since it will only increase the tardiness penalty. This can be proved by contradiction. Let $x$ be the increase in the completion times of all the jobs in $J+J^\prime$ with $x>0$. The earliness and tardiness for the jobs in $J$ become $E_i-x$ and $T_i+x$, respectively and the new total penalty ($V_{J}$) for the job sequence $J$ becomes

\begin{equation}
\begin{array}{rcl}
V_{J}& = &\sum\limits_{i=1}^{n} (\alpha_{i}\cdot(E_{i}-x)+\beta_{i}\cdot(T_{i}+x)) \\
     & = &\sum\limits_{i=1}^{n} (\alpha_{i}\cdot E_{i}+\beta_{i}\cdot T_{i}) + \sum\limits_{i=1}^{n} (\beta_{i}-\alpha_{i})\cdot x\;.
\end{array}
\end{equation}

Equation~\eqref{solj} yields

\begin{equation}~\label{vj}
V_{J} = Sol_J + \sum\limits_{i=1}^{n} (\beta_{i}-\alpha_{i})\cdot x \;.
\end{equation}

Since $Sol_{J}$ is optimal $Sol_{J} \leq V_{J}$, we have
\begin{equation}~\label{positive}
\sum\limits_{i=1}^{n} (\beta_{i}-\alpha_{i})\cdot x \geq 0\;.
\end{equation}

Besides, the total tardiness penalty for the sequence $J^\prime$ will further increase by the same quantity, say $\delta$, $\delta \geq 0$. With this shift, the new overall solution value $V_{J+J'}^{\prime}$ will be

\begin{equation}
V_{J+J'}^{\prime} = V_{J} + PT_{J'} + \delta \;.
\end{equation}

Substituting $V_{J}$ from Equation~\eqref{vj} we have
\begin{equation}
V_{J+J'}^{\prime} = Sol_J + \sum\limits_{i=1}^{n} (\beta_{i}-\alpha_{i})\cdot x + PT_{J'} + \delta \;.
\end{equation}

Using Equation~\eqref{vj+j'} gives
\begin{equation}
V_{J+J'}^{\prime} = V_{J+J'} + \sum\limits_{i=1}^{n} (\beta_{i}-\alpha_{i})\cdot x + \delta\;.
\end{equation}

Using Equation~\eqref{positive} and $\delta \geq 0$ we have
\begin{equation}
V_{J+J'}^{\prime}  \geq  V_{J+J'} \;.
\end{equation}

This shows that only a reduction in the completion times of all the jobs can improve the solution. Thus, there exists a $\gamma$, $\gamma \geq 0$ by which the completion times are reduced to achieve the optimal solution for the new job sequence $J+J^\prime$. Clearly, $C_{i}-C_{i}^\prime = \gamma$ for $i=1,2,\dots,n$ and $ C_{k}^\prime = C_{n} -\gamma + \sum_{\substack{\tau=n+1}}^{k} P_{\tau}$, $(k = n+1,n+2,\dots,n+n^\prime)$ since all the jobs are processed one after another without any idle time. 
\end{proof}

\begin{table}
\centering
{\footnotesize 
\caption{Results obtained for the single machine case of the common due date problem and comparison with benchmark results provided in the OR Library~\cite{or}. For any given number of jobs there are $10$ different instances provided and each instance is designated a number $k$. The gray boxes indicate the instances for which our algorithm could not achieve the known solution values given in~\cite{or}.}
\label{result}
%\begin{tabular}{ >{\centering\arraybackslash}m{1.2cm}|>{\centering\arraybackslash}m{1.2cm} >{\centering\arraybackslash}m{1.2cm}|>{\centering\arraybackslash}m{1.2cm} >{\centering\arraybackslash}m{1.2cm}|>{\centering\arraybackslash}m{1.2cm} >{\centering\arraybackslash}m{1.2cm}|>{\centering\arraybackslash}m{1.2cm} >{\centering\arraybackslash}m{1.2cm} } \hline
\begin{tabular}{ |p{1cm}|p{1cm} p{0.9cm}|p{0.9cm} p{0.9cm}|p{0.9cm} p{0.9cm}|p{0.9cm} p{0.9cm}| }\hline

\textbf{Jobs} & \multicolumn{2}{c|}{\textbf{h=0.2}} & \multicolumn{2}{c|}{\textbf{h=0.4}} & \multicolumn{2}{c|}{\textbf{h=0.6}} & \multicolumn{2}{c|}{\textbf{h=0.8}} \\ \hline \hline
\textbf{n=10} & \textbf{APSA} & \textbf{BR} & \textbf{APSA} & \textbf{BR} & \textbf{APSA} & \textbf{BR} & \textbf{APSA} & \textbf{BR} \\ \hline
\textbf{k=1} & 1936 & 1936 & 1025 & 1025 & 841 & 841 & 818 & 818\\ 
\textbf{k=2} & 1042 & 1042 & 615 & 615 & 615 & 615 & 615 & 615\\ 
\textbf{k=3} & 1586 & 1586 & 917 & 917 & 793 & 793 & 793 & 793\\ 
\textbf{k=4} & 2139 & 2139 & 1230 & 1230 & 815 & 815 & 803 & 803\\ 
\textbf{k=5} & 1187 & 1187 & 630 & 630 & 521 & 521 & 521 & 521\\ 
\textbf{k=6} & 1521 & 1521 & 908 & 908 & 755 & 755 & 755 & 755\\ 
\textbf{k=7} & 2170 & 2170 & 1374 & 1374 & 1101 & 1101 & 1083 & 1083\\ 
\textbf{k=8} & 1720 & 1720 & 1020 & 1020 & 610 & 610 & 540 & 540\\ 
\textbf{k=9} & 1574 & 1574 & 876 & 876 & 582 & 582 & 554 & 554\\ 
\textbf{k=10} & 1869 & 1869 & 1136 & 1136 & 710 & 710 & 671 & 671\\ \hline
\textbf{n=20} & \textbf{APSA} & \textbf{BR} & \textbf{APSA} & \textbf{BR} & \textbf{APSA} & \textbf{BR} & \textbf{APSA} & \textbf{BR} \\ \hline
\textbf{k=1} & 4394 & 4431 & 3066 & 3066 & 2986 & 2986 & 2986 & 2986\\ 
\textbf{k=2} & 8430 & 8567 & 4847 & 4897 & 3206 & 3260 & 2980 & 2980\\ 
\textbf{k=3} & 6210 & 6331 & 3838 & 3883 & 3583 & 3600 & 3583 & 3600\\ 
\textbf{k=4} & 9188 & 9478 & 5118 & 5122 & 3317 & 3336 & 3040 & 3040\\ 
\textbf{k=5} & 4215 & 4340 & 2495 & 2571 & 2173 & 2206 & 2173 & 2206\\ 
\textbf{k=6} & 6527 & 6766 & 3582 & 3601 & 3010 & 3016 & 3010 & 3016\\ 
\textbf{k=7} & 10455 & 11101 & 6279 & 6357 & 4126 & 4175 & 3878 & 3900\\ 
\textbf{k=8} & 3920 & 4203 & 2145 & 2151 & 1638 & 1638 & 1638 & 1638\\ 
\textbf{k=9} & 3465 & 3530 & 2096 & 2097 & 1965 & 1992 & 1965 & 1992\\ 
\textbf{k=10} & 4979 & 5545 & 3012 & 3192 & 2110 & 2116 & 1995 & 1995\\ \hline
\textbf{n=50} & \textbf{APSA} & \textbf{BR} & \textbf{APSA} & \textbf{BR} & \textbf{APSA} & \textbf{BR} & \textbf{APSA} & \textbf{BR} \\ \hline
\textbf{k=1} & 40936 & 42363 & 24146 & 24868 & 17970 & 17990 & 17982 & 17990\\ 
\textbf{k=2} & 31174 & 33637 & 18451 & 19279 & 14217 & 14231 & 14067 & 14132\\ 
\textbf{k=3} & 35552 & 37641 & 20996 & 21353 & 16497 & 16497 & \cellcolor{gray!50}16517 & \cellcolor{gray!50}16497\\ 
\textbf{k=4} & 28037 & 30166 & 17137 & 17495 & 14088 & 14105 & 14101 & 14105\\ 
\textbf{k=5} & 32347 & 32604 & 18049 & 18441 & 14615 & 14650 & 14615 & 14650\\ 
\textbf{k=6} & 35628 & 36920 & 20790 & 21497 & \cellcolor{gray!50}14328 & \cellcolor{gray!50}14251 & 14075 & 14075\\ 
\textbf{k=7} & 43203 & 44277 & 23076 & 23883 & 17715 & 17715 & 17699 & 17715\\ 
\textbf{k=8} & 43961 & 46065 & 25111 & 25402 & 21345 & 21367 & 21351 & 21367\\
\textbf{k=9} & 34600 & 36397 & 20302 & 21929 & 14202 & 14298 & \cellcolor{gray!50}14064 & \cellcolor{gray!50}13952\\ 
\textbf{k=10} & 33643 & 35797 & 19564 & 20048 & 14367 & 14377 & 14374 & 14377\\ \hline
\textbf{n=100} & \textbf{APSA} & \textbf{BR} & \textbf{APSA} & \textbf{BR} & \textbf{APSA} & \textbf{BR} & \textbf{APSA} & \textbf{BR} \\ \hline
\textbf{k=1} & 148316 & 156103 & 89537 & 89588 & 72017 & 72019 & 72017 & 72019\\ 
\textbf{k=2} & 129379 & 132605 & 73828 & 74854 & 59350 & 59351 & 59348 & 59351\\ 
\textbf{k=3} & 136385 & 137463 & 83963 & 85363 & \cellcolor{gray!50}68671 & \cellcolor{gray!50}68537 & \cellcolor{gray!50}68670 & \cellcolor{gray!50}68537\\ 
\textbf{k=4} & 134338 & 137265 & 87255 & 87730 & 69192 & 69231 & 69039 & 69231\\ 
\textbf{k=5} & 129057 & 136761 & 74626 & 76424 & 55291 & 55291 & 55275 & 55277\\ 
\textbf{k=6} & 145927 & 151938 & 81182 & 86724 & 62507 & 62519 & 62410 & 62519\\ 
\textbf{k=7} & 138574 & 141613 & 79482 & 79854 & \cellcolor{gray!50}62302 & \cellcolor{gray!50}62213 & 62208 & 62213\\ 
\textbf{k=8} & 164281 & 168086 & 95197 & 95361 & 80722 & 80844 & 80841 & 80844\\ 
\textbf{k=9} & 121189 & 125153 & 72817 & 73605 & 58769 & 58771 & 58771 & 58771\\ 
\textbf{k=10} & 121425 & 124446 & \cellcolor{gray!50}72741 & \cellcolor{gray!50}72399 & 61416 & 61419 & 61416 & 61419\\ \hline \hline
\end{tabular}
}
\end{table}

\begin{table}
\centering
{\footnotesize
\caption{Results obtained for the single machine case of the common due date problem and comparison with benchmark results provided in the OR Library~\cite{or}. There are $10$ different instances provided and each instance is designated a number $k$.The gray boxes indicate the instances for which our algorithm could not achieve the known solution values given in~\cite{or}.}
\label{result1}
%\begin{tabular}{ >{\centering\arraybackslash}m{1.2cm}|>{\centering\arraybackslash}m{1.2cm} >{\centering\arraybackslash}m{1.2cm}|>{\centering\arraybackslash}m{1.2cm} >{\centering\arraybackslash}m{1.2cm}|>{\centering\arraybackslash}m{1.2cm} >{\centering\arraybackslash}m{1.2cm}|>{\centering\arraybackslash}m{1.2cm} >{\centering\arraybackslash}m{1.2cm} } \hline
\begin{tabular}{ |p{1.05cm}|p{0.9cm} p{0.9cm}|p{0.9cm} p{0.9cm}|p{0.9cm} p{0.9cm}|p{0.9cm} p{0.9cm}| }\hline

\textbf{Jobs} & \multicolumn{2}{c|}{\textbf{h=0.2}} & \multicolumn{2}{c|}{\textbf{h=0.4}} & \multicolumn{2}{c|}{\textbf{h=0.6}} & \multicolumn{2}{c|}{\textbf{h=0.8}} \\ \hline \hline
\textbf{n=200}  & \textbf{APSA} & \textbf{BR} & \textbf{APSA} & \textbf{BR} & \textbf{APSA} & \textbf{BR} & \textbf{APSA} & \textbf{BR} \\ \hline
\textbf{k=1} & 523042 & 526666 & 300079 & 301449 & 254268 & 254268 & \cellcolor{gray!50}254362 & \cellcolor{gray!50}254268\\ 
\textbf{k=2} & 557884 & 566643 & 333930 & 335714 & \cellcolor{gray!50}266105 & \cellcolor{gray!50}266028 & \cellcolor{gray!50}266549 & \cellcolor{gray!50}266028\\ 
\textbf{k=3} & 510959 & 529919 & 303924 & 308278 & 254647 & 254647 & 254572 & 254647\\ 
\textbf{k=4} & 596719 & 603709 & 359966 & 360852 & \cellcolor{gray!50}297305 & \cellcolor{gray!50}297269 & \cellcolor{gray!50}297729 & \cellcolor{gray!50}297269\\ 
\textbf{k=5} & 543709 & 547953 & 317707 & 322268 & \cellcolor{gray!50}260703 & \cellcolor{gray!50}260455 & 260423 & 260455\\ 
\textbf{k=6} & 500354 & 502276 & 287916 & 292453 & 235947 & 236160 & 236013 & 236160\\ 
\textbf{k=7} & 477734 & 479651 & 279487 & 279576 & 246910 & 247555 & 247521 & 247555\\ 
\textbf{k=8} & 522470 & 530896 & 287932 & 288746 & 225519 & 225572 & \cellcolor{gray!50}225897 & \cellcolor{gray!50}225572\\ 
\textbf{k=9} & 561956 & 575353 & 324475 & 331107 & 254953 & 255029 & 254956 & 255029\\ 
\textbf{k=10} & 560632 & 572866 & 328964 & 332808 & 269172 & 269236 & 269208 & 269236\\ \hline
\end{tabular}
}
\end{table}

\begin{table}
\centering
{\footnotesize
\caption{Average run-times in seconds for the single machine cases for the obtained solutions. The average run-time for any job is the average of all the $40$ instances.}
\label{runtimeS}
\begin{tabular}{ |c|p{0.02cm}cp{0.02cm}|p{0.02cm}cp{0.02cm}|p{0.02cm}cp{0.02cm}|p{0.02cm}cp{0.02cm}|p{0.02cm}cp{0.02cm}|}\hline
\textbf{No. of Jobs} && 10 &&& 20 &&& 50 &&& 100 &&& 200 &\\ \hline
\textbf{BR} && 0.9 &&& 47.8 &&& 87.3 &&& 284.9 &&& 955.2& \\ \hline
\textbf{APSA} && 0.46 &&& 1.12 &&& 22.17 &&& 55.22 &&& 132.32 & \\ \hline
\end{tabular}
}
\end{table}

\section{Results}
In this Section we present our results for the single and parallel machine cases. We used our exact algorithms with simulated annealing for finding the best job sequence. All the algorithms were implemented on MATLAB\textsuperscript{\textregistered} and run on a machine with a $1.73$ GHz processor and $2$ GB RAM. We present our results for the benchmark instances provided by Biskup and Feldmann in~\cite{biskup} for both the single and parallel machine cases. For brevity, we call our approach as APSA and the benchmark results as BR.

We use a modified Simulated Annealing algorithm to generate job sequences and Algorithm~\ref{main} to optimize each sequence to its minimum penalty. Our experiments show that an ensemble size of $4 + n/10$ and the maximum number of iterations as $500\cdot n$, where $n$ is the number of jobs, work best for the provided benchmark instances. The runtime for all the results is the time after which the solutions mentioned in Table~\ref{result} and~\ref{result1} are obtained. The initial temperature is kept as twice the standard deviation of the energy at infinite temperature: $\sigma_{E_{T=\infty}} = \sqrt{\langle E^2 \rangle_{T=\infty} - \langle E \rangle^2_{T=\infty}}$. We estimate this quantity by randomly sampling the configuration space~\cite{salamon}. An exponential schedule for cooling is adopted with a cooling rate of $0.999$. One of the modifications from the standard SA is in the acceptance criterion. We implement two acceptance criteria: the Metropolis acceptance probability, $\min\{1,\exp((-\hspace{-0.3em}\bigtriangleup\hspace{-0.3em} E)/T)\}$~\cite{salamon} and a constant acceptance probability of $0.07$. A solution is accepted with this constant probability if it is rejected by the Metropolis criterion. This concept of a constant probability is useful when the SA is run for many iterations and the metropolis acceptance probability is almost zero, since the temperature would become infinitesimally small. Apart from this, we also incorporate elitism in our modified SA. Elitism has been successfully adopted in evolutionary algorithms for several complex optimization problems~\cite{elitist1,elitist2}. We observed that this concept works well for the CDD problem. As for the perturbation rule, we first randomly select a certain number of jobs in any job sequence and permute them randomly to create a new sequence. The number of jobs selected for this permutation is taken as  $2 + \lfloor\sqrt{n/10}\rfloor$, where $n$ is the number of jobs. For large instances the size of this permutation is quite small but we have observed that it works well with our modified simulated annealing algorithm.

In Table~\ref{result} and~\ref{result1} we present our results (APSA) for the single machine case. The results provided by Biskup and Feldmann can be found in~\cite{feldmann}. The first $40$ instances with $10$ jobs each have been already solved optimally by Biskup and Feldmann and we reach the optimality for all these instances within an average run-time of $0.457$ seconds. 

Among the next $160$ instances we achieve equal results for $13$ instances, better results for $133$ instances and for the remaining $14$ instances with $50$, $100$ and $200$ jobs, our results are within a gap of $0.803$ percent, $0.1955$ percent and $0.1958$ percent respectively. Feldmann and Biskup~\cite{feldmann} solved these instances using three metaheuristic approaches, namely: simulated annealing, evolutionary strategies and threshold accepting; and presented the average run-time for the instances on a Pentium/$90$ PC.
\begin{table}
\centering
{\footnotesize
\caption{Results obtained for parallel machines for the benchmark instances for $k=1$ with $2$, $3$ and $4$ machines up to $200$ jobs.}
\label{result3}
%\begin{tabular}{  >{\centering\arraybackslash}m{2cm}|>{\centering\arraybackslash}m{2cm}| >{\centering\arraybackslash}m{2cm}|>{\centering\arraybackslash}m{2.5cm}| >{\centering\arraybackslash}m{2.5cm}}\hline
\begin{tabular}{ | c | c | c | c |c|}\hline
\textbf{No. of Jobs} & \textbf{Machines} & \mathversion{bold}$h$\mathversion{normal}\textbf{ value} &  \textbf{Results Obtained} & \textbf{Run-Time (seconds)}  \\ \hline \hline
\multirow{6}{*}{\textbf{10}} & \multirow{2}{*}{2} & 0.4 & 612 & 0.0473 \\ %\cline{3-5}
& & 0.8 & 398 & 0.0352 \\ \cline{2-5}
& \multirow{2}{*}{3} &  0.4 &  507 & 0.0239 \\ %\cline{3-5}
& & 0.8 & 256 & 0.0252 \\ \cline{2-5}
& \multirow{2}{*}{4} & 0.4 & 364 & 0.0098 \\ %\cline{3-5}
& & 0.8 &  197 & 0.0157 \\ \hline
\multirow{6}{*}{\textbf{20}} & \multirow{2}{*}{2} & 0.4 & 1527 & 0.4061 \\ %\cline{3-5}
& & 0.8 & 1469 & 0.6082 \\ \cline{2-5}
& \multirow{2}{*}{3} &  0.4 &  1085 & 3.4794 \\ %\cline{3-5}
& & 0.8 & 957 & 7.8108 \\ \cline{2-5}
& \multirow{2}{*}{4} & 0.4 & 848 & 8.5814 \\ %\cline{3-5}
& & 0.8 &  686 & 8.4581 \\ \hline
\multirow{6}{*}{\textbf{50}} & \multirow{2}{*}{2} & 0.4 & 12911 & 7.780 \\ %\cline{3-5}
& & 0.8 & 9020 & 55.3845 \\ \cline{2-5}
& \multirow{2}{*}{3} &  0.4 &  8913 & 59.992 \\ %\cline{3-5}
& & 0.8 & 6010 & 125.867 \\ \cline{2-5}
& \multirow{2}{*}{4} & 0.4 & 7097 & 153.566 \\ %\cline{3-5}
& & 0.8 &  4551 & 22.347 \\ \hline
\multirow{6}{*}{\textbf{100}} & \multirow{2}{*}{2} & 0.4 & 45451 & 101.475 \\ %\cline{3-5}
& & 0.8 & 37195 & 147.832 \\ \cline{2-5}
& \multirow{2}{*}{3} &  0.4 &  31133 & 159.872 \\ %\cline{3-5}
& & 0.8 & 25097 & 186.762 \\ \cline{2-5}
& \multirow{2}{*}{4} & 0.4 & 23904 & 236.132 \\ %\cline{3-5}
& & 0.8 &  19001 & 392.967 \\ \hline
\multirow{6}{*}{\textbf{200}} & \multirow{2}{*}{2} & 0.4 & 154094 & 165.436 \\ %\cline{3-5}
& & 0.8 & 133848 & 231.768 \\ \cline{2-5}
& \multirow{2}{*}{3} &  0.4 &  103450 & 226.140 \\ %\cline{3-5}
& & 0.8 & 96649 & 365.982 \\ \cline{2-5}
& \multirow{2}{*}{4} & 0.4 & 81437 & 438.272 \\ %\cline{3-5}
& & 0.8 &  71263 & 500.00 \\ \hline \hline
\end{tabular}
}
\end{table}
In Table~\ref{runtimeS} we show our average run-times for the instances and compare them with the heuristic approach considered in~\cite{feldmann}. Apparently our approach is faster and achieves better results. However, there is a difference in the machines used for the implementation of the algorithms. In Table~\ref{result3} we present results for the same problem but with parallel machines for the Biskup benchmark instances. The computation has been carried out for $k=1$ up to $200$ jobs and a different number of machines with restrictive factor $h$. We make a change in the due date as the number of machines increases and assume that the due date $D$ is $D=\lfloor h \cdot \sum_{\substack{i=1}}^{n} P_{i}/m\rfloor$. This assumption makes sense as an increase in the number of machines means that the jobs can be completed much faster and reducing the due-date will test the whole setup for more competitive scenarios. We implemented Algorithm~\ref{parallel} with six different combinations of the number of machines and the restrictive factor. Since these instances have not been solved for the parallel machines, we are presenting the upper bounds achieved for these instances using Algorithm~\ref{parallel} and the modified simulated annealing.

\section{Conclusion and Future Direction}
In this paper we present two novel exact polynomial algorithms for the common due-date problem to optimize any given job sequence. We prove the optimality for the single machine case and the run-time complexity of the algorithms. We implemented our algorithms over the benchmark instances provided by Biskup and Feldmann~\cite{biskup} and the results obtained by using our algorithms are superior to the benchmark results in quality. We discuss how our approach can be used for non-identical parallel machines and present results for the parallel machine case for the same instances. Furthermore, we also discuss the efficiency of our algorithm for a special dynamic case of CDD at the planning stage.

\section*{Acknowledgement}
The research project was promoted and funded by the European Union and the Free State of Saxony, Germany. The authors take the responsibility for the content of this chapter.

\end{document}